 \newtheorem{theorem}{Theorem}[section]
  \newtheorem{claim}[theorem]{Claim}
  \newtheorem{mobsv}[theorem]{Observation}
\theoremstyle{definition}
 \newtheorem{definition}[theorem]{Definition}
\newif\ifqed
\def\GrabProofArgument[#1]{ #1: \egroup\ignorespaces}
\def\proof{\noindent\textbf\bgroup Proof%
	\@ifnextchar[{\GrabProofArgument}{. \egroup\ignorespaces}\global\qedtrue}
\def\qedhere{\ifmmode\tag*{\qedsign}\else\hspace*{\fill}\qedsign\medskip\fi\global\qedfalse}
\def\qedsign{$\Box$}
\newcommand{\OPT}{\ensuremath{\mathsf{OPT}}\xspace}
\newcommand{\opt}{\OPT}
\newcommand{\of}{\bar{s}}
\newcommand{\on}{s}
\newcommand{\ith}[1]{${#1}^{\textnormal{th}}$}
\DeclareMathOperator{\LCS}{\textsf{LCS}}
\DeclareMathOperator{\lcs}{lcs}
\DeclareMathOperator{\ED}{\textsf{ED}}
\DeclareMathOperator{\ed}{ed}
\DeclareMathOperator{\neps}{APPROXIMATE-CLOSEST-SUBSTR}
\DeclareMathOperator{\lcsrootn}{APPROXIMATE-LCS}
\DeclareMathOperator{\edrootn}{APPROXIMATE-ED}
\DeclareMathOperator{\lcsp}{LCSPosition}
\newenvironment{tbox}{
\begin{tcolorbox}[enhanced,
                  boxsep=2pt,
                  left=1pt,
                  right=1pt,
                  top=4pt,
                  boxrule=1pt,
                  arc=0pt,
                  colback=white,
                  colframe=black,
                  breakable]
}{
\end{tcolorbox}
}
\definecolor{mygreen}{RGB}{20,140,80}
\definecolor{mylightgray}{RGB}{230,230,230}
\definecolor{mygreen}{RGB}{20,140,80}
\definecolor{mydarkgray}{gray}{0.15} 
\definecolor{oceanblue}{HTML}{2c55c2}
\newcommand*\samethanks[1][\value{footnote}]{\footnotemark[#1]}
\newcounter{proccnt}
\newcommand{\konote}[1]{}
\title{Asymmetric Streaming Algorithms for \\ Edit Distance and LCS}
\author{
	Alireza Farhadi\thanks{University of Maryland. Email: \texttt{\{farhadi,hajiagha\}@cs.umd.edu}.}
	\and
	MohammadTaghi Hajiaghayi\samethanks[1]
	\and
	Aviad Rubinstein\thanks{Stanford University. Email: \texttt{aviad@cs.stanford.edu}.}
	\and
	Saeed Seddighin\thanks{TTIC. Email: \texttt{saeedreza.seddighin@gmail.com}.}
}
\newcommand{\Saeed}[1]{}
\begin{document}
	\newcommand{\ignore}[1]{}
\renewcommand{\theenumi}{(\roman{enumi})}
\renewcommand{\labelenumi}{\theenumi.}
\sloppy

%
%

\date{}

\maketitle


\begin{abstract}
The \textit{edit distance} (\textsf{ED}) and \textit{longest common subsequence} (\textsf{LCS}) are two fundamental problems which quantify how similar two strings are to one
another. In this paper, we consider these problems in the asymmetric streaming model introduced by Andoni \textit{et
al.}~\cite{andoni2010polylogarithmic} (FOCS'10) and Saks and Seshadhri \cite{saks2013space} (SODA'13). In this model we have random access to one string and streaming access the other string. 
Our main contribution is  a constant factor approximation algorithm for \textsf{ED} with the memory of $\tilde O(n^{\delta})$ for any constant $\delta > 0$. In addition to this, we present an upper bound of $\tilde O_\epsilon(\sqrt{n})$ on the memory needed to approximate \textsf{ED} or \textsf{LCS} within a factor $1+\epsilon$. All our algorithms are deterministic and run in a single pass.

For approximating \textsf{ED} within a constant factor, we discover yet another application of triangle inequality, this time in the context of streaming algorithms. Triangle inequality has been previously used to obtain subquadratic time approximation algorithms for \textsf{ED}. Our technique is novel and elegantly utilizes triangle inequality to save memory at the expense of an exponential increase in the runtime.
\end{abstract}
\newpage
\Saeed{sometimes we are writing \textsf{LCS}, sometime we just write LCS. Please make it consistent.}

\section{Introduction}\label{introduction}
We consider \textit{edit distance} (\textsf{ED}) and \textit{longest
common subsequence} (\textsf{LCS}) which are  classic problems measuring the similarity
 between two strings. Edit
distance is defined on two  strings $\on$ and $\of$ and seeks the
smallest number of character insertions, character deletions, and
character substitutions to transform $\on$ into $\of$. While in edit
distance the goal is to make a transformation, longest common
subsequence asks for the largest string that appears as a
subsequence in both $\on$ and $\of$.

Edit distance and longest common subsequence have applications in
various contexts, such as computational biology, text processing,
compiler optimization, data analysis, image analysis, among others.
As a result, both problems have been subject to a plethora of
studies since 1950 (e.g. see
~\cite{bellman1957dynamic,boroujeni2018approximating,abboud2015tight,backurs2015edit,andoni2012approximating,andoni2010polylogarithmic,indyk2001algorithmic,
    batu2006oblivious,crochemore2001fast,
    de1999extensive,hunt1977fast,masek1980faster,
    crochemore2003subquadratic,gusfield1997algorithms,landau1998incremental,
    bringman2018multivariate,
     alves2006coarse,bestpaper,
    saha17, saha16, saha15, 
    saha172, saha19, amitkumarted, rabaniedit,
     lcs-hardness2, lcs-hardness3, hss19, andonied1, smootheded, AK07, andonied3, lcrs-journal,
     dwt, charikar2018estimating, dtw2, saks2013space, aviad1, abboud-lcs-hardness2, spaaedulam, lcsnew,bestpaper}).

Both of the problems are often used to measure the similarity of large strings. For example, a human genome consists of almost three billion base pairs that are modeled as a string for similarity testing. Classic algorithms for the problems require quadratic runtime as well as linear memory to find a solution. Unfortunately, none of these bounds seem practical for real-world applications. Therefore, recent work on \textsf{ED} and \textsf{LCS} focus on obtaining fast algorithms~\cite{saeedfocs19,lcsnew,boroujeni2018approximating,andoni2012approximating,andoni2010polylogarithmic,andonied1,AK07,rubinstein2019reducing,im1,im2,bestpaper} as well as solutions with small memory~\cite{hss19,boroujeni2018approximating,CGK16,DBLP:conf/soda/GopalanJKK07}.

The {\em streaming} setting is an increasingly popular framework to model memory constraints. In this setting, the
input arrives as a data stream while only sublinear memory is available to the algorithm. The goal is to design an algorithm that solves/approximates the solution 
by taking a few passes over the data.
 While several works have studied \textsf{ED} and \textsf{LCS} in the streaming model (see Section \ref{sec:related} for a detailed discussion), positive results are known only for the low-distance regime~\cite{liben2006finding,SW07,belazzougui2016edit,CGK16}. 
In addition to this, strong lower bounds are given for the streaming variant of \textsf{LCS}~\cite{liben2006finding,SW07}.

Inspired by the work of Andoni \textit{et
al.}~\cite{andoni2010polylogarithmic} (FOCS'10), Saks and Seshadhri \cite{saks2013space} (SODA'13) studied the problem of approximating $n\ -$~\textsf{LCS} (which is the edit distance between two strings when insertions and deletions, but not substitutions, are allowed) in the asymmetric model. In this model we have random access to one of the strings and streaming access to the other string. They showed that $(1+\epsilon)$ approximation of $n\ -$~\textsf{LCS} can be found with a memory of $\tilde{O_\epsilon}(\sqrt{n})$. 

In this work, we study \textsf{ED} and \textsf{LCS} in the asymmetric model.
We present a single-pass deterministic constant factor approximation algorithm for \textsf{ED} that uses only $\tilde{O}(n^{\delta})$ memory for any constant $\delta > 0$. In addition to this, we show that with the memory of $ \tilde{O_\epsilon}(\sqrt{n})$ one can approximate both \textsf{ED} and \textsf{LCS} within a factor of $1 \pm \epsilon$. All our algorithms are deterministic and run in a single-pass. Moreover, our algorithm for \textsf{LCS} is tight due to a lower bound given in~\cite{DBLP:conf/focs/GalG07}. It is also worth mentioning that the lower bound of $\Omega(\log^2 n/\epsilon)$ is known for computing $1+\epsilon$ approximation of $n\ -$~\textsf{LCS} due to the result of \cite{ naumovitz2014polylogarithmic}.

\textsf{LIS} and distance to monotonicity (\textsf{DTM}) are special cases of \textsf{LCS} and \textsf{ED} that are also studied in the streaming model~\cite{DBLP:conf/soda/GopalanJKK07, saks2013space}. In these two problems, one of the strings is a permutation of numbers in $[n]$ and the second string is the sorted permutation $\langle 1,2,\ldots,n\rangle$. Therefore, for these special cases $\of[i]$ is always equal to $i$. As a result, our algorithms for \textsf{ED} and \textsf{LCS} can be seen as a generalization of previous works on streaming \textsf{LIS} and distance to monotonicity.


\begin{table}[!htbp]
	\centering
	\begin{tabular}{|l|c|c|c|}
		\hline
		problem & approximation factor & memory & reference\\
		\hline
		\textsf{ED} & $O(2^{1/\delta})$ & $\tilde{O}(n^{\delta}/\delta)$ & Theorem \ref{thm:mc} \\
		\hline
		\textsf{ED} & $1+\epsilon$ & $\tilde O_\epsilon(\sqrt{n})$ & Theorem \ref{thm:mc2}\\
		\hline
		\textsf{LCS} & $1-\epsilon$ & $\tilde{O_\epsilon}(\sqrt{n})$ & Theorem \ref{thm:mc3}\\
		\hline
		\textsf{LIS} & $1-\epsilon$ & $\tilde{O_\epsilon}(\sqrt{n})$ & \cite{DBLP:conf/soda/GopalanJKK07} \\
		\hline
				\textsf{$n\ -$~\textsf{LCS}} & $1+\epsilon$ & $\tilde{O_\epsilon}(\sqrt{n})$ & \cite{ saks2013space} \\
								\hline
				\textsf{DTM} & $1+\epsilon$ & ${O_\epsilon}(\log^2 n)$ & \cite{ saks2013space, naumovitz2014polylogarithmic}\\
		\hline							
		\textsf{DTM} & $1+\epsilon$ & $\tilde{O_\epsilon}(\sqrt{n})$ & \cite{DBLP:conf/soda/GopalanJKK07}\\
		\hline
				\textsf{DTM} & $2$ & $O(\log^2 n)$ & \cite{DBLP:conf/soda/ErgunJ08}\\
		\hline
		\textsf{DTM} & $4$ & $O(\log^2 n)$ & \cite{DBLP:conf/soda/GopalanJKK07}\\
		\hline
	\end{tabular}
	\caption{The results of this paper along with previous work.}
\end{table}

\subsection{Related work}\label{sec:related}
Quadratic time solutions for  \textsf{ED}\ and \textsf{LCS}\ have
been known for many decades~\cite{leiserson2001introduction}. Recently, it has been shown that a truly subquadratic time
solution for either \textsf{ED} or \textsf{LCS} refutes {\em Strong
Exponential Time Hypothesis} (\textsf{SETH}), a conjecture widely believed
in the community
(see~\cite{backurs2015edit,abboud2015tight,lcs-hardness3}). Therefore, much attention is given to
approximation algorithms for the
two problems. For edit distance, a series of
works~\cite{landau1998incremental},~\cite{bar2004approximating},~\cite{batu2006oblivious},
and~\cite{andoni2012approximating} improve the approximation factor
culminating in the seminal work of Andoni, Krauthgamer, and
Onak~\cite{andoni2010polylogarithmic} that finally obtains a
polylogarithmic approximation factor in near-linear time. More recently constant factor approximation algorithms with truly subquadratic runtimes are obtained for edit distance (a question which was open for a few decades): first a quantum algorithm~\cite{boroujeni2018approximating}, then a classic solution~\cite{bestpaper}, and finally for far strings, near linear time solutions are also given~\cite{im1,im2}.
\textsf{LCS} has also received tremendous  attention in recent
years~\cite{lcsnew,rubinstein2019reducing,saeedfocs19,ab17,
abboud-lcs-hardness2, cglrr18}. Only trivial solutions were known
for \textsf{LCS} until very recently: a 2 approximate solution when
the alphabet is 0/1 and an $O(\sqrt{n})$ approximate solution for
general alphabets in linear time. Both these bounds are recently
improved by Hajiaghayi \textit{et
al.}~\cite{lcsnew} and Rubinstein and
Song~\cite{rubinstein2019reducing} (see also a recent approximation
algorithms given by Rubinstein \textit{et al.}~\cite{saeedfocs19}).

Streaming algorithms for edit distance have been
limited to the case that the distance between the two strings is 
{\em bounded} by a parameter $k$ which is substantially smaller than $n$. A parameterized
streaming algorithm  that makes one-pass over its input $\on$ and $\of$
with space $O(k^6)$ (which can be as large as the input size) and
running time $O(n+k^6)$~\cite{CGK16} (STOC'16)  is presented recently as well. 
\paragraph{Independent work.} Our $\tilde O_\epsilon(\sqrt{n})$ result for $\ED$ is also achieved independently in a recent work by Cheng \textit{et al.}~\cite{cheng2020space}. However, they do not give our main result which is a constant approximation streaming algorithm for $\ED$ with the memory of $\tilde O(n^\delta)$. They also give an algorithm for finding $1+\epsilon$ approximation of $\ED$ with the memory of $O(n^\delta)$. However, their algorithm works only when we have random access to both strings, and their algorithm does not work in the streaming or asymmetric streaming model.
\subsection{Preliminaries}
For a string $\on$, we use $\on[i]$ to denote the \ith{i} character in $\on$. We use $\on[i, j]$ to denote the substring of $\on$ from the \ith{i} character to the \ith{j} character. We also use $\on[i, j)$ to denote the substring of $\on$ from the \ith{i} character to \ith{(j-1)} character ($\on[i,i)$ is an empty string). 

Given two strings $\on$ and $\of$, the longest common subsequence ($\LCS$) of $\on$ and $\of$ is a string $t$ with the maximum length such that $t$ is a subsequence of both $\on$ and $\of$. In other words, $t$ can be obtained from both $\on$ and $\of$ by removing some of the characters. We use $\lcs(\on,\of)$ to denote the length of the $\LCS$ of two strings $\on$ and $\of$. The edit distance ($\ED$) between two strings $\on$ and $\of$, denoted by $\ed(\on,\of)$, is the minimum number of character insertions, deletions, and substitutions needed to transform one string to the other string.

\textbf{Asymmetric streaming model.} Throughout this paper, we assume that the input of the algorithm consists of two strings $\of$ and $\on$. We assume for simplicity and without loss of generality that the two strings have equal length $n$. We call the string $\of$ the \textit{offline} string and assume that the algorithm has random access to the characters of $\of$ by making a query. The other string $s$ arrives as a stream of characters. We call $s$ the \textit{online} string.


\subsection{Our Technique: Triangle Inequality}
As mentioned earlier, our main result is an algorithm with memory $\tilde{O}(n^{\delta})$ for any constant $\delta > 0$ that approximates edit distance within a constant factor in the asymettric model. When the available memory is limited, a typical approach to approximating edit distance is to break each of the strings into smaller pieces and find a solution in which each piece of a string is entirely transformed into another piece of the other string. Such solutions have been referred to as ``window-compatible solutions"~\cite{boroujeni2018approximating} or ``matching between candidate intervals" in previous work~\cite{hss19} (a similar techniques is also used in~\cite{bestpaper} to obtain a constant-factor approximate solution for \textsf{ED}). One should construct the pieces in a way that there always exists such a solution whose approximation factor is bounded. Previous work give several constructions with small approximation factors~\cite{boroujeni2018approximating,bestpaper,hss19,saeedfocs19}.

Let us refer to these pieces as windows and to such solutions as window-compatible solutions. It is not hard to see that if the edit distance between every pair of windows is available, then one can find an optimal window-compatible solution without any knowledge of the strings. That is, just knowing the distances between the windows suffices to find the optimal window-compatible solution. On the other hand, computing the edit distance between each pair of windows requires memory proportional to the window sizes. Therefore, a convenient way to design a memory-efficient algorithm (in certain settings such as MPC) is to give a construction for the windows in which the maximum window size is small and that it guarantees the existence of an almost optimal window-compatible solution. 

The problem becomes more challenging in the streaming setting as the online string ($\on$) is only available in a single pass. Therefore, when the characters of a window of $\on$ are stored in the memory, we have to use that information immediately to compute the edit distance of that particular window with all windows of the offline string. If the maximum window size is $l$, then we need memory $\Omega(l)$ for that purpose. Moreover, the number of windows for such a construction should be at least $\Omega(n/l)$, otherwise some parts of the strings are not included in any window and such a construction cannot guarantee any approximation factor. Thus, one needs to keep track of $O(n/l)$  values for each window of the online string, determining its distance from the windows of the offline string. Roughly speaking, this suggests that this approach can only take us as far as obtaining a solution with memory $O(\sqrt{n})$. We more formally show in Section \ref{sec:rootn} that this technique leads to a solution with approximation factor $1+\epsilon$ and memory $\tilde{O_\epsilon}(\sqrt{n})$.

Triangle inequality is the key to improving the memory of the algorithm. The key idea is summarized in the following: consider a window $w$ of the online string for which we would like to store its distance from all windows of the offline string. Instead of directly storing these values, we find a substring $[\ell,r]$ of the offline string whose edit distance is the smallest to $w$. Let the distance be $d$. We only keep 3 integer numbers  $\ell$, $r$, $d$ for this window. Surprisingly, these 3 numbers suffice to recover a 3-approximate solution for the edit distance of $w$ from any substring of the offline strings (including all the windows) without even knowing $w$! More precisely, whenever the distance of $w$ from an interval $\of[\ell',r']$ of the offline string is desired, we approximate $\mathsf{ed}(w,\of[\ell',r'])$ by $d+\mathsf{ed}(\of[\ell,r],\of[\ell',r'])$. It is not hard to see by triangle inequality that $d+\mathsf{ed}(\of[\ell,r],\of[\ell',r'])$ is at least as large and at most 3 times larger than the actual distance between $\of[\ell',r']$ and $w$. Moreover, both substrings $\of[\ell',r']$ and $\of[\ell,r]$ are available via queries since they both belong to the offline string. Finally, when two windows of the offline string are available via queries, we show using Savitch's theorem \cite{savitch1970relationships} that their edit distance can be computed with poly-logarithmic memory.

To improve the memory of the algorithm down to $O(n^\delta)$ for any $\delta > 0$, we recursively apply the above idea to make the window sizes smaller in every recursion. This comes at the expense of a multiplicative factor of roughly $3$ in the approximation for each level of recursion. Finding the optimal window-compatible solution for our setting is also cumbersome due to memory constraints. Instead of determining that with dynamic programming, we use a brute force. This takes a significant hit on the runtime of the algorithm while keeping the memory small. More details about this algorithm is given in Section \ref{sec:constant}.

\begin{restatable}{theorem}{mainc}
\label{thm:mc}
Given an offline and online strings of length $n$ and any constant $\delta>0$, there exists a single-pass deterministic streaming algorithm that finds a $O(2^{1/\delta})$ approximation of the edit distance using $\tilde{O}(n^\delta/\delta)$ memory.
\end{restatable}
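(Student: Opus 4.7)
The plan is to build a hierarchy of window decompositions with $k = \lceil 1/\delta \rceil$ levels, where at level $i$ the windows have length $l_i = n^{1-i\delta}$, so that the top level is the whole string and the finest level has windows of size $\Theta(n^\delta)$. Both $\of$ and $\on$ are partitioned at every level; the decomposition of $\of$ is virtual since $\of$ is random access, while $\on$ is processed in a single streaming pass, building the representation of each level-$i$ window on the fly from the already-constructed representations of its level-$(i+1)$ subwindows.

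At every level, for each window $w$ of $\on$ I will store only a triple $(\ell, r, d)$ where $\of[\ell, r]$ minimizes $\ed(w, \of[\ell, r])$ over a designated family of $\of$-intervals aligned to the level-$i$ grid, and $d$ equals that minimum. By triangle inequality, for any other grid-aligned interval $\of[\ell', r']$ we can estimate $\ed(w, \of[\ell', r'])$ by $d + \ed(\of[\ell, r], \of[\ell', r'])$, overshooting by a factor of at most $3$, and the estimate is computable without looking at $w$ again. The recursive construction is the following: to build the triple for a level-$i$ window $w$ of $\on$ (which is composed of level-$(i+1)$ subwindows $w_1, \dots, w_m$ whose triples have already been stored), I enumerate all grid-aligned candidate intervals $\of[\ell, r]$, and for each one evaluate $\ed(w, \of[\ell, r])$ by running the brute-force window-compatible matching DP of Section \ref{sec:constant}, where each subcost $\ed(w_j, \of[\ell'', r''])$ is replaced by the triangle-inequality proxy obtained from the stored triple of $w_j$ together with $\ed(\of[\ell_j, r_j], \of[\ell'', r''])$; the latter quantity involves only $\of$ and can be computed in $\mathrm{polylog}(n)$ space via Savitch's theorem \cite{savitch1970relationships}. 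The minimizing $(\ell, r)$ and its value then become the triple for $w$, after which the $w_j$ triples can be discarded.

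The approximation analysis is a clean induction on the level. If $\alpha_i$ denotes the worst-case ratio between the computed cost and $\OPT$ restricted to level-$i$ window-compatible solutions, the triangle-inequality proxy inflates costs by at most a factor of $3$ per level, so $\alpha_i \le 3\alpha_{i+1}$; the base case at level $k$ computes distances honestly (the windows have size $n^\delta$ so they fit in memory), and by choosing the grid as in the constant-factor construction referenced in the overview, the loss from restricting to window-compatible solutions at each level is a constant. Chaining the $k = \lceil 1/\delta \rceil$ levels gives an overall factor $O(3^{1/\delta}) = O(2^{1/\delta})$ after absorbing into the exponent. For memory, at any moment we only hold the active path through the hierarchy, so we store $\tilde O(n^{i\delta})$ level-$i$ window triples for each of the $1/\delta$ levels, each triple being $O(\log n)$ bits, plus $\mathrm{polylog}(n)$ Savitch workspace, summing to $\tilde O(n^\delta/\delta)$.

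The main obstacle is making the recursion genuinely streaming: the level-$i$ triple of a window of $\on$ must be finalized at the instant its last character arrives, using only the compact triples of its subwindows plus random access to $\of$. This forces the candidate-enumeration loop over $\of$-intervals, and the inner Savitch invocations for $\of$-to-$\of$ edit distances, to be interleaved with the arrival of $\on$ in a schedule where nothing about $\on$ needs to be revisited. Verifying that the (exponential) runtime required for the brute-force window-compatible matching does not force us to retain any additional state from $\on$, and that the hierarchical bookkeeping truly telescopes to $\tilde O(n^\delta/\delta)$ memory rather than summing destructively across levels, is the technical heart of the argument.
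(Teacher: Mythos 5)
Your approach is the paper's approach: recursively find, for each window $w$ of $\on$, a substring of $\of$ approximately closest to $w$ in edit distance, store only the triple $(\ell,r,d)$, and use the triangle-inequality proxy $d + \ed(\of[\ell,r],\of[\ell',r'])$ whenever the distance from $w$ to another substring of $\of$ is needed, with Savitch's theorem supplying $\of$-to-$\of$ edit distances in $O(\log^2 n)$ space.

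There is, however, a genuine gap in the approximation analysis. You state the per-level recurrence as $\alpha_i \le 3\alpha_{i+1}$, which solves to $\Theta(3^{1/\delta})$, and then claim this is $O(2^{1/\delta})$ ``after absorbing into the exponent.'' That step is false: $3^{1/\delta}/2^{1/\delta} = (3/2)^{1/\delta}$ is unbounded as $\delta \to 0$. The paper needs and proves a sharper composition lemma (Claim \ref{lem:closetoed}): if $(\ell,r,d)$ is an $\alpha$-approximate closest substring for $w$, then for any $\of[\ell^*,r^*]$ the proxy $d + \ed(\of[\ell,r],\of[\ell^*,r^*])$ is a $(2\alpha+1)$-approximation to $\ed(w,\of[\ell^*,r^*])$ --- the key being that one of the three triangle-inequality terms is accounted for only once rather than being multiplied by $\alpha$. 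The recurrence $\alpha \mapsto 2\alpha+1$ starting from $\alpha=1$ at the base solves to $2^{k+1}-1$ after $k \approx 1/\delta$ levels, which is $O(2^{1/\delta})$ as claimed. Your looser $3\alpha$ recurrence buries the ``$+1$'' inside the multiplicative constant at every level, and that is exactly where the bound degrades.

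Two smaller points. First, you restrict the closest-substring search to a ``designated family of $\of$-intervals aligned to the level-$i$ grid,'' which introduces a discretization error you never bound. The paper does not discretize the $\of$ side at all: it brute-forces over \emph{all} breakpoint sequences $p_0 \le p_1 \le \cdots \le p_\xi$ (the theorem only controls memory, not time, and the runtime is allowed to be exponential), so the optimal window-compatible mapping on the $\of$ side incurs zero loss and equation~(\ref{eq:optimalmapping}) holds with equality. Second, your memory accounting writes ``$\tilde O(n^{i\delta})$ level-$i$ window triples''; on the active recursion path it should be $n^\delta$ triples per level (one per child of the currently open node), across $O(1/\delta)$ levels, which is how $\tilde O(n^\delta/\delta)$ actually arises.
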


\section{Constant Approximation for Edit Distance}\label{sec:constant}

Our main results is a streaming algorithm that given any constant $\delta>0$ finds a constant approximation of the edit distance using $\tilde{O}(n^\delta)$ memory. 
As we discussed in the previous section, instead of directly solving the edit distance, we aim to find a substring of $\on$ such that its edit distance is smallest to $\of$. We formally define this problem as follows. 

\vspace{3mm}

\begin{tbox}
\texttt{Closest Substring}

\vspace{1.5mm}

\textbf{Input:} An offline string $\of$ and an online string $\on$.

\vspace{1.5mm}

\textbf{Output:} Indices $l$, $r$ and $\ed(\of[l,r], \on)$ such that $\ed(\of[l, r],\on) \le \ed(\of[i, j],\on)$ for every $1 \le i \le j \le n$.
\end{tbox}

\vspace{3mm}

We first show that how solving the closest substring problem can give us a good approximation of the edit distance. Let $\of[l, r]$ be the substring of $\of$ with the minimum edit distance to $\on$. We know by the definition of edit distance that it satisfies the \textit{triangle inequality}\footnote{$\ed(s_1, s_3) \le \ed(s_1, s_2)+\ed(s_2, s_3)$ for any strings $s_1, s_2, s_3$.}. Therefore, we have
\begin{align}
\label{eq:triandown}
\ed(\of , \on) \le \ed(\of, \of[l, r]) + \ed(\of[l, r], \on) \,.
\end{align}
We also have,
\begin{align}
\ed&(\of, \of[l, r]) + \ed(\of[l, r], \on) \nonumber \\
&\le  \ed(\of, \on) + \ed(\of[l, r], \on) + \ed(\of[l, r], \on) &\text{By the triangle inequality.} \nonumber \\
\label{eq:trianup} 
&\le 3\ed(\of, \on) & \text{Since $\of[l,r]$ has the minimum ED to $\on$.} 
\end{align}

It follows from (\ref{eq:triandown}) and (\ref{eq:trianup}) that $\ed(\of, \of[l, r]) + \ed(\of[l, r], \on)$ is a $3$-approximation of the edit distance between $\of$ and $\on$. Therefore, if we design a streaming algorithm that finds $\of[l, r]$ and its edit distance from $\on$, we can then estimate the edit distance of $\on$ and $\of$ by computing $\ed(\of, \of[l, r]) + \ed(\of[l, r], \on)$. In the following theorem we show that $\ed(\of, \of[l, r]) + \ed(\of[l, r], \on)$ can be computed using a poly-logarithmic memory. In specific, we show that the edit distance between any two substrings of the offline string can be computed using a very small memory of $O(\log^2 n)$. The proof is available in Appendix \ref{sec:appx}.

\begin{restatable}{theorem}{logtwo}
\label{thm:logtwo}
Suppose that we have random access to two given strings $\on$ and $\of$ of length $n$. Then $\lcs(\on, \of)$ and $\ed(\on, \of)$ can be computed using $O(\log^2 n)$ memory.
\end{restatable}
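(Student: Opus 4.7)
The plan is to cast both $\ed(\on, \of)$ and $\lcs(\on, \of)$ as shortest/longest path problems in an implicit $(n+1) \times (n+1)$ grid DAG, and then solve these using a Savitch-style recursive reachability procedure of the same flavor as the one alluded to earlier in the excerpt.

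First I set up the edit distance DAG $G$: its vertices are pairs $(i,j)$ for $0 \le i,j \le n$, and from $(i,j)$ there is an edge to $(i+1,j)$ (deletion) and to $(i,j+1)$ (insertion) each of weight $1$, and an edge to $(i+1,j+1)$ of weight $0$ if $\on[i+1] = \of[j+1]$ and weight $1$ otherwise (match/substitution). Then $\ed(\on,\of)$ is the minimum-weight path from $(0,0)$ to $(n,n)$; such a path has at most $2n$ edges and total weight between $0$ and $n$. Because we have random access to $\on$ and $\of$, the weight of any specific edge can be evaluated in $O(\log n)$ space.

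Next I define the bounded-cost reachability predicate
\[
P(u, v, k, d) \;=\; \text{``there is a walk in $G$ from $u$ to $v$ with at most $k$ edges and total weight at most $d$,''}
\]
and compute it recursively: $P(u,v,k,d)$ holds iff either $k \le 1$ (which is decidable directly from the local edge relation in $O(\log n)$ space), or there exist a midpoint $w$ and a split $d' + d'' = d$ such that both $P(u,w,\lceil k/2 \rceil,d')$ and $P(w,v,\lfloor k/2 \rfloor,d'')$ hold. A stack frame stores only $(u,v,k,d)$ together with the current guesses $(w,d')$, all of which fit in $O(\log n)$ bits; since $k$ halves at every level, the recursion depth is $O(\log n)$, yielding total space $O(\log^2 n)$. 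A binary search over $d \in \{0,1,\dots,n\}$ applied to $P\bigl((0,0),(n,n),2n,d\bigr)$ then returns $\ed(\on,\of)$ using only $O(\log n)$ additional bits.

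For $\lcs(\on,\of)$ I would use the same machinery on the analogous DAG but track match-diagonals instead of weight: the predicate $Q(u,v,k,m)$ asserts that some walk from $u$ to $v$ of at most $k$ edges uses at least $m$ match-diagonal edges, and it obeys the same splitting identity with $m' + m'' = m$. Binary searching $m \in \{0,1,\dots,n\}$ against $Q\bigl((0,0),(n,n),2n,m\bigr)$ gives $\lcs$ in the same space bound. The main step to verify carefully will be the bookkeeping: that each recursive frame really stores only $O(\log n)$ bits, that the enumeration over midpoints $w$ and splits $(d',d'')$ (respectively $(m',m'')$) is performed without persisting past guesses, and that composing the $O(\log n)$-deep recursion with the outer $O(\log n)$-bit binary search and the $O(\log n)$-space random-access queries to $\on,\of$ cleanly yields the claimed $O(\log^2 n)$ bound.
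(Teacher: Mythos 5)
Your proof is correct and takes essentially the same route as the paper's: the paper simply observes that $\LCS$ and $\ED$ are in NL (nondeterministic logspace) and invokes Savitch's theorem as a black box, while you unfold the Savitch-style balanced recursion directly on the implicit grid DAG together with an outer binary search on the cost (resp.\ number of matches). Both hinge on the identical divide-and-conquer reachability idea and yield the same $O(\log^2 n)$ bound, so yours is a more self-contained rendering of the paper's one-line argument rather than a different approach.
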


Therefore, by finding the substring that has the minimum edit distance to $\on$, we can get a good approximation of the edit distance. Nonetheless, we do not know any  streaming algorithm with the memory of $O(n^\delta)$ for finding closest substring, and our algorithm only finds an approximate solution for this problem. In other words, it finds a substring of $\of$ such that its approximate edit distance to $\on$ is close to the minimum. In the rest of the section, we show that how we can approximately solve the closest substring problem with the memory of $\tilde{O}(n^\delta)$. Given an online string, we divide the online string into $n^{1-\delta}$  windows of size $n^\delta$. Our algorithm (formally as Algorithm \ref{alg:nepsilon}), then recursively finds substrings of $\of$ that have the minimum edit distance from each of these windows. Note that for each window we can store the result of solving the closest substring problem in $O(\log n)$ (We can store only three numbers which are the start and the end of the interval and the approximate edit distance to the online string). Therefore, by the end of all recursive calls our algorithm needs to store $O(n^\delta)$ values.

\begin{algorithm} [h]
 \KwData{An offline string $\of$ of length $n$, a stream of characters of the online string $\on$, and a parameter $\delta>0$.}
 \begin{algorithmic} [1]
 \IF { $|\on| \le n^\delta$}
 \STATE Store all characters of $\on$ in the memory.
 \STATE Find a substring of $\of$ that has the minimum edit distance to $\on$. Let $\of[l, r]$ be this substring and $d$ be its edit distance.
 \RETURN $l$, $r$ and $d$.
 \ELSE
 \STATE $\xi \leftarrow n^\delta$.
 \STATE Divide $\on$ into $\xi$ windows $\on^*_1, \on^*_2, \ldots, \on^*_\xi$ of size $|\on|/\xi$.
 \FOR { $i \in [\xi]$}
 \STATE Recursively find the closest substring of $\of$ from $\on^*_i$  . Let $l_i, r_i$ be the start and the end of this substring respectively, and $d_i$ be the approximate edit distance of this substring to $\on^*_i$.
 \ENDFOR
 \STATE $min\_dist \leftarrow \infty$. 
 \FOR {$ 1\le p_0\le p_1 \le \ldots \le p_{\xi} \le n+1$}
 
 \STATE $dist=\sum_{i=1}^{\xi} d_i+ \ed\big(\of[p_{i-1}, p_i), \of[l_i, r_i]\big)$.
 \IF { $dist < min\_dist$}
	\STATE $min\_dist \leftarrow dist$.
	\STATE $l \leftarrow p_0$.
	\STATE $r \leftarrow p_\xi -1$. 
 \ENDIF 
 \ENDFOR
 \RETURN $l$, $r$ and $min\_dist$.
 \ENDIF 
 \end{algorithmic}
\caption{Algorithm $\neps$ for approximating ED.}
\label{alg:nepsilon}

\end{algorithm}

In order to find the solution of the closest substring problem using these partial solutions, our algorithm considers all different substrings $\of[l, r]$ of $\of$ and all different mappings between the windows of the $\on$ and the substrings of $\of[l,r]$. Then, for any mapping it estimates the edit distance between a window of $\on$ and its mapped substring of $\of[l, r]$ using the solution of the closest substring problem that we have found in the recursive call.

In order to analyze our algorithm, we first show that finding any approximation of the closest substring problem, can yield us an approximation for the edit distance. We first define an approximate version of the closest substring problem as follows.

\begin{definition}
Given an offline string $\of$ and online string $\on$, we say that the substring $\of[l,r]$ along with its approximate edit distance $d$ is an $\alpha$-approximation for the closest substring problem if for any substring $\of[l^*,r^*]$ we  have
\begin{align}
\label{eq:cspr}
\ed(\of[l,r],\on) \le d \le \alpha \cdot \ed(\of[l^*,r^*],\on) \,. 
\end{align}
\end{definition}
In the following claim we show that we can use any $\alpha$-approximation of the closest substring problem to get a $O(\alpha)$-approximation for the edit distance.

\begin{claim}
\label{lem:closetoed}
Let $\of[l,r]$ be an $\alpha$ approximation of the closest substring problem and let $d$ be its approximate edit distance to $\on$. Then for any substring $\of[l^*,r^*]$,
  $d+\ed\big(\of[l,r],\of[l^*,r^*]\big)$
  is a $(2\alpha+1)$-approximation for the edit distance between $\of[l^*,r^*]$ and $\on$.
\end{claim}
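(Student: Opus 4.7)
The plan is to establish the two inequalities that together say $d + \ed(\of[l,r], \of[l^*,r^*])$ sandwiches $\ed(\of[l^*,r^*], \on)$ from above (by a factor of $1$) and from below (by a factor of $1/(2\alpha+1)$). Both directions follow cleanly from triangle inequality for edit distance plus the two conditions in the definition of an $\alpha$-approximation for \texttt{Closest Substring}, namely $\ed(\of[l,r], \on) \le d$ and $d \le \alpha \cdot \ed(\of[l',r'], \on)$ for every substring $\of[l',r']$.

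First, for the lower bound on the estimate, I would apply the triangle inequality between the three strings $\of[l^*,r^*]$, $\of[l,r]$, and $\on$ to get
\begin{equation*}
\ed(\of[l^*,r^*], \on) \;\le\; \ed(\of[l^*,r^*], \of[l,r]) + \ed(\of[l,r], \on) \;\le\; \ed(\of[l,r], \of[l^*,r^*]) + d,
\end{equation*}
where the second step uses the first condition of (\ref{eq:cspr}). This is exactly the statement that $d + \ed(\of[l,r], \of[l^*,r^*])$ never underestimates the true edit distance.

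For the upper bound, I would again use the triangle inequality, this time to bound $\ed(\of[l,r], \of[l^*,r^*])$ in terms of $\ed(\of[l^*,r^*], \on)$:
\begin{equation*}
\ed(\of[l,r], \of[l^*,r^*]) \;\le\; \ed(\of[l,r], \on) + \ed(\on, \of[l^*,r^*]) \;\le\; d + \ed(\of[l^*,r^*], \on).
\end{equation*}
Combining this with the bound $d \le \alpha \cdot \ed(\of[l^*,r^*], \on)$, which is the second half of (\ref{eq:cspr}) instantiated at $(l^*, r^*)$, yields
\begin{equation*}
d + \ed(\of[l,r], \of[l^*,r^*]) \;\le\; 2d + \ed(\of[l^*,r^*], \on) \;\le\; (2\alpha + 1)\, \ed(\of[l^*,r^*], \on),
\end{equation*}
which is the desired approximation ratio.

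There is no real obstacle here; the claim is essentially the same triangle-inequality calculation used earlier in Section~\ref{sec:constant} to show that the exact closest substring gives a $3$-approximation, generalized to absorb an $\alpha$ factor both from the definitional slack $\ed(\of[l,r], \on) \le d$ and from the optimality slack $d \le \alpha \cdot \ed(\of[l^*,r^*], \on)$. The only thing worth double-checking in a careful write-up is that the $\alpha$-approximation definition is applied at the correct substring $\of[l^*,r^*]$ (so the quantifier ``for any substring'' in (\ref{eq:cspr}) is being used, not just the optimal one), and that no implicit assumption is made about the length of $\of[l^*,r^*]$ or its relation to $\of[l,r]$.
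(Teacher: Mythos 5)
Your proof is correct and is essentially the same triangle-inequality argument the paper uses: both directions match, with the only cosmetic difference being that you bound $\ed(\of[l,r],\on)$ directly by $d$ before applying the $\alpha$-optimality of $d$, whereas the paper first bounds $\ed(\of[l,r],\on) \le \alpha\cdot\ed(\of[l^*,r^*],\on)$ and then replaces $d$ separately; both routes give $(2\alpha+1)$.
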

\begin{proof}
First we show that $\big(d+\ed\big(\of[l,r],\of[l^*,r^*]\big)\big)$ is not less than the edit distance between $\of[l^*,r^*]$ and $\on$.
\begin{align*}
d+\ed\big(\of[l,r],\of[l^*,r^*]\big) &\ge \ed(\of[l,r],\on) + \ed\big(\of[l,r],\of[l^*,r^*]\big) & \text{By (\ref{eq:cspr}).} \\
&\ge \ed\big(\of[l^*,r^*],\on\big) \,. & \text{By the triangle inequality.}
\end{align*}
We now show that the value of $\big(d+\ed\big(\of[l,r],\of[l^*,r^*]\big)\big)$ is at most $(2\alpha+1) \cdot \ed\big(\on, \of[l^*,r^*]\big)$. Thus it gives us a $(2\alpha+1)$-approximation of the edit distance. We have
\begin{align*}
d+\ed\big(\of[l,r],\of[l^*,r^*]\big) &\le d+\ed\big(\on,\of[l,r]\big)+ \ed\big(\on,\of[l^*,r^*]\big) & \text{By the triangle inequality.} \\
& \le d+ \alpha \cdot \ed\big(\on,\of[l^*,r^*]\big)+ \ed\big(\on,\of[l^*,r^*]\big) & \text{By (\ref{eq:cspr}).}\\
& = d+ (\alpha+1) \cdot \ed\big(\on,\of[l^*,r^*]\big) \\
& \le \alpha \cdot \ed\big(\on,\of[l^*,r^*]\big) + (\alpha+1) \cdot \ed\big(\on,\of[l^*,r^*]\big) & \text{By (\ref{eq:cspr}).}\\
& = (2\alpha+1) \cdot \ed\big(\on,\of[l^*,r^*]\big) \,,
\end{align*}
which completes the proof of the claim.
\end{proof}

Based on our discussion above, we design an algorithm that finds a constant approximation of the edit distance  using $\tilde{O}(n^\delta)$ memory for any $\delta>0$. The algorithm first divides the online string into $n^\delta$ windows with the equal length. Therefore, the length of each window is $n^{1-\delta}$. It then finds an approximate solution of the closest substring problem for each window recursively. By Claim \ref{lem:closetoed}, we can use the approximate solution of the closest substring problem for each window, to find its edit distance from every other substring of the offline string. The algorithm uses these approximate solutions to approximate the edit distance between the entire online string and any substring of the offline string.
 
Note that by each recursive call the length of the online string will get smaller by a multiplicative factor of $n^{-\delta}$. Therefore, when the depth of the recursive calls becomes $1/\delta$, the length of the remaining online string is bounded by $O(n^\delta)$ and we can store all of this remaining online string in the memory and find the exact solution of the closest substring problem. Thus, the depth of the recursion is bounded by $O(1/\delta)$. In the following theorem we show that the approximation ratio of our algorithm is $O(2^{1/\delta})$. 
\begin{theorem}
\label{thm:alg1}
Given an offline string $\of$, an online string $\on$ and any constant $\delta>0$, let $n$ be the length of the offline string and $n^\gamma$ be the length of the online string where $\gamma>0$. Then, Algorithm \ref{alg:nepsilon} finds a $O\big(2^{\gamma/\delta}\big)$ approximation for the closest substring problem.
\end{theorem}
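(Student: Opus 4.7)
The plan is to prove the bound by induction on the recursion depth, which is at most $\lceil \gamma/\delta \rceil$ since each recursive call shrinks the online string's length by a factor of $n^\delta$. Let $\alpha(\gamma)$ denote the worst-case approximation ratio that Algorithm~\ref{alg:nepsilon} achieves when the online string has length $n^\gamma$. The goal is to establish the recurrence $\alpha(\gamma) \le 2\alpha(\gamma - \delta) + 1$ with base case $\alpha(\gamma) = 1$ for $\gamma \le \delta$; unrolling it over $\lceil \gamma/\delta \rceil$ levels then yields $\alpha(\gamma) = O(2^{\gamma/\delta})$ as required. The base case is essentially immediate: when $|\on| \le n^\delta$, Algorithm~\ref{alg:nepsilon} stores $\on$ entirely and, using random access to $\of$, iterates over every pair $(l, r)$ to compute $\ed(\of[l, r], \on)$ exactly, returning the true optimum.

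For the inductive step, note that each window $\on^*_i$ has length $|\on|/\xi = n^{\gamma - \delta}$, so by hypothesis the $i$-th recursive call returns a triple $(l_i, r_i, d_i)$ that is an $\alpha(\gamma - \delta)$-approximation of the closest substring problem on $\on^*_i$. Applying Claim~\ref{lem:closetoed} with parameter $\alpha(\gamma - \delta)$, for every $i$ and every candidate slice $\of[p_{i-1}, p_i)$,
\begin{align*}
\ed\big(\on^*_i, \of[p_{i-1}, p_i)\big) \;\le\; d_i + \ed\big(\of[l_i, r_i], \of[p_{i-1}, p_i)\big) \;\le\; \big(2\alpha(\gamma - \delta) + 1\big)\cdot \ed\big(\on^*_i, \of[p_{i-1}, p_i)\big).
\end{align*}
Summing the left inequality over $i$ and using the subadditivity of edit distance under concatenation, for whatever partition $p$ the algorithm ultimately returns I would get $\mathit{dist} \ge \ed(\on, \of[p_0, p_\xi - 1])$, which is the lower inequality of~(\ref{eq:cspr}) with $l = p_0$ and $r = p_\xi - 1$.

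For the upper inequality of~(\ref{eq:cspr}), I would fix any competing substring $\of[l^*, r^*]$ together with an optimal alignment of $\on$ into $\of[l^*, r^*]$. Cutting that alignment at the $\xi - 1$ internal boundaries between consecutive windows of $\on$ produces indices $l^* = q_0 \le q_1 \le \cdots \le q_\xi = r^* + 1$ satisfying $\sum_i \ed(\on^*_i, \of[q_{i-1}, q_i)) \le \ed(\on, \of[l^*, r^*])$, since the cost of an alignment splits additively across any cut of its source string. Because the brute-force loop of Algorithm~\ref{alg:nepsilon} enumerates every partition, it considers $p = q$ in particular, for which the right inequality above gives $\mathit{dist} \le (2\alpha(\gamma - \delta) + 1)\cdot \ed(\on, \of[l^*, r^*])$; the minimum over partitions can only be smaller, which completes the inductive step.

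The step I expect to be most delicate is the alignment-decomposition argument used for the upper bound: one has to argue that an optimal alignment of $\on$ to $\of[l^*, r^*]$ splits cleanly at the $\xi - 1$ window boundaries of $\on$ so that the induced sub-alignment costs sum to at most the original cost, guaranteeing that the brute force encounters a partition whose value is already within factor $2\alpha(\gamma - \delta) + 1$ of the true optimum. The lower-bound direction follows routinely from subadditivity, and solving the recurrence is straightforward; I would also remark that the parameter $\delta$ and the normalizing length $n$ used to set $\xi = n^\delta$ are passed unchanged through the recursion, so the claimed depth $\lceil \gamma/\delta \rceil$ is indeed reached.
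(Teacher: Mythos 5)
Your proposal is correct and follows essentially the same route as the paper's proof: induction on the recursion depth (equivalently on $\gamma$), the base case being the brute-force exact search when $|\on|\le n^\delta$, and the inductive step combining Claim~\ref{lem:closetoed} with a cut of the optimal alignment at the $\xi-1$ window boundaries of $\on$ to exhibit a partition $q$ whose cost the brute-force loop certifies. The paper tracks the concrete constant $2^{\lceil\gamma/\delta\rceil+1}-1$ directly rather than stating the recurrence $\alpha(\gamma)\le 2\alpha(\gamma-\delta)+1$, but these are the same computation; you are slightly more explicit about the lower-bound direction of the definition in~(\ref{eq:cspr}) (that the returned $d$ dominates $\ed(\of[l,r],\on)$ via subadditivity), which the paper leaves implicit. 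Both arguments agree on the step you flagged as delicate: projecting an optimal alignment of $\on$ onto $\of[l^*,r^*]$ at window boundaries produces subalignments whose costs sum to at most the total cost, which is exactly what the paper's equation~(\ref{eq:optimalmapping}) asserts.
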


\begin{proof}
We use induction on the length of the online string to prove the theorem. In specific, using induction on $\gamma$ we show that the approximation ratio of the algorithm is bounded by $2^{\lceil \gamma/\delta \rceil+1}-1$.
If  the length of the online string is at most $n^\delta$, then the algorithm stores all of the characters of the online string and find the exact solution. In other words, for $\gamma \le \delta$, the algorithm finds the exact solution. Thus, its approximation ratio is $1$ and the induction clearly holds.

Otherwise, we can assume the length of the online string is $n^\gamma$ where $\gamma > \delta$. In that case the algorithm divides the online string into $n^\delta$ windows of equal length. For the simplicity of the presentation, we assume that the length of the online string is divisible by $n^\delta$. Therefore, the algorithm divides $\on$ into $n^\delta$ windows $\on^*_1, \on^*_2, \cdots, \on^*_{n^\delta}$ each with the length of $n^{\gamma- \delta}$, and we have $\on^*_i = \on[(i-1) \cdot n^{\gamma - \delta}+1, i \cdot n^{\gamma -\delta}]$. The algorithm then recursively finds the closest substring of $\of$ for each of these windows. For the window $\on^*_i$, let $\of[l_i,r_i]$ be the substring returned by the algorithm and let $d_i$ be its approximate edit distance from $\on^*_i$. By the induction hypothesis we have that the approximation ratio of the solution for each window is bounded by
\begin{align*}
2^{\lceil (\gamma-\delta)/\delta \rceil+1}-1 = 2^{\lceil \gamma/\delta \rceil}-1 \,.
\end{align*}

Let $\of[l^*,r^*]$ be an arbitrary substring of $\of$. Consider the optimal mapping between $\on^*_i$ windows and $\of[l^*,r^*]$. Let assume that in the optimal mapping, window $\on^*_i$ is mapped to $\of[p^*_{i-1},p^*_i)$ (see Figure \ref{figs:windows}) where
\begin{align*}
l^*=p^*_0\le p^*_1 \le \cdots \le p^*_{n^\delta} = r^*+1 \,.
\end{align*}
Since $p^*_0, p^*_1, \ldots, p^*_{n^\delta}$ is the optimal mapping, we have
\begin{align}
\label{eq:optimalmapping}
\ed(\on,\of[l^*,r^*]) = \sum_{i=1}^{n^\delta} \ed\big(\on^*_i,\of[p^*_{i-1},p^*_i)\big) \,.
\end{align}

\begin{figure}[t]
\usetikzlibrary{patterns}

\begin{center}
\begin{tikzpicture}[scale=0.86, transform shape]

\draw (0,0) -- (18,0) -- (18,1) -- (0,1) -- (0,0);

\draw (0.3,0) -- (0.3,1);
\draw  (0,0) rectangle (0.3,1);
\draw (0.6,0) -- (0.6,1);
\draw (0.9,0) -- (0.9,1);

\node[text width=1cm] at (1.8,0.4) {$\ldots$};
\node[text width=1cm] at (2.8,0.4) {$\ldots$};
\node[text width=1cm] at (3.8,0.4) {$\ldots$};

\draw (4.1,0) -- (4.1,1);
\draw (4.4,0) -- (4.4,1);
\draw (4.7,0)  -- (4.7,1);
\draw (5,0) -- (5,1);

\draw [dashed] (0.15,0.1) rectangle (4.85,0.9);
\draw [fill=gray, opacity=0.2] (0.15,0.1) rectangle (4.85,0.9);

\draw (5.3,0) -- (5.3,1);
\draw  (5,0) rectangle (5.3,1);
\draw (5.6,0) -- (5.6,1);
\draw (5.9,0) -- (5.9,1);

\node[text width=1cm] at (6.8,0.4) {$\ldots$};
\node[text width=1cm] at (7.8,0.4) {$\ldots$};
\node[text width=1cm] at (8.8,0.4) {$\ldots$};

\draw (9.1,0) -- (9.1,1);
\draw (9.4,0) -- (9.4,1);
\draw (9.7,0)  -- (9.7,1);
\draw (10,0) -- (10,1);

\draw [dashed] (5.15,0.1) rectangle (9.85,0.9);
\draw [fill=gray, opacity=0.2] (5.15,0.1) rectangle (9.85,0.9);

\draw (13,0) -- (13,1);
\node[text width=1cm] at (11.5,0.4) {$\ldots$};

\draw (13.3,0) -- (13.3,1);
\draw  (13,0) rectangle (13.3,1);
\draw (13.6,0) -- (13.6,1);
\draw (13.9,0) -- (13.9,1);

\node[text width=1cm] at (14.8,0.4) {$\ldots$};
\node[text width=1cm] at (15.8,0.4) {$\ldots$};
\node[text width=1cm] at (16.8,0.4) {$\ldots$};

\draw (17.1,0) -- (17.1,1);
\draw (17.4,0) -- (17.4,1);
\draw (17.7,0)  -- (17.7,1);
\draw (18,0) -- (18,1);

\draw [fill=gray, opacity=0.2] (13.15,0.1) rectangle (17.85,0.9);
\draw [dashed] (13.15,0.1) rectangle (17.85,0.9);

\node[text width=0.3cm] at (0.15,1.3) 
{   \huge $\uparrow$ };
	\node[text width=0.3cm] at (0.15,1.9) 
{   \large $1$ };

\node[text width=0.3cm] at (5.15,1.3) 
{   \huge $\uparrow$ };
\node[text width=2cm] at (5.55,1.9) 
{   \large $n^{\gamma- \delta} +1$ };

\node[text width=0.3cm] at (10.05,1.3) 
{   \huge $\uparrow$ };
\node[text width=2cm] at (10.45,1.9) 
{   \large $2n^{\gamma- \delta} +1$ };

\node[text width=0.3cm] at (13.15,1.3) 
{   \huge $\uparrow$ };
\node[text width=2cm] at (13.55,1.9) 
{   \large $n- n^{\gamma- \delta} +1 $ };

\node[text width=0.3cm] at (17.85,1.3) 
{   \huge $\uparrow$ };
\node[text width=0.3cm] at (17.90,1.9) 
{   \large $n$ };

\node[text width=0.5cm] at (-0.40,0.4) 
{   \huge $\on$ };

\draw (0,-1) -- (18,-1) -- (18,-2) -- (0,-2) -- (0,-1);

\draw (0.3,-1) -- (0.3,-2);
\draw  (0,-1) rectangle (0.3,-2);
\draw (0.6,-1) -- (0.6,-2);
\draw (0.9,-1) -- (0.9,-2);

\node[text width=1cm] at (1.8,-1.6) {$\ldots$};

\draw (2.1,-1) -- (2.1,-2);
\draw (2.4,-1) -- (2.4,-2);
\draw (2.7,-1)  -- (2.7,-2);
\draw (3,-1) -- (3,-2);

\node[text width=1cm] at (4.15,-1.6) {$\ldots$};


\draw (4.8,-1) -- (4.8,-2);
\draw (5.1,-1) -- (5.1,-2);
\draw (5.4,-1)  -- (5.4,-2);
\draw (5.7,-1) -- (5.7,-2);

\node[text width=1cm] at (7.55,-1.6) {$\ldots$};


\draw (9.1,-1) -- (9.1,-2);
\draw (9.4,-1) -- (9.4,-2);
\draw (9.7,-1)  -- (9.7,-2);
\draw (10,-1) -- (10,-2);

\node[text width=1cm] at (11.5,-1.6) {$\ldots$};

\draw (13,-1) -- (13,-2);
\draw (13.3,-1) -- (13.3,-2);
\draw (13.6,-1) -- (13.6,-2);
\draw (13.9,-1) -- (13.9,-2);

\node[text width=1cm] at (14.85,-1.6) {$\ldots$};

\draw (15.3,-1) -- (15.3,-2);
\draw (15.6,-1) -- (15.6,-2);
\draw (15.9,-1) -- (15.9,-2);
\draw (16.2,-1) -- (16.2,-2);

\node[text width=1cm] at (16.9,-1.6) {$\ldots$};

\draw (17.1,-1) -- (17.1,-2);
\draw (17.4,-1) -- (17.4,-2);
\draw (17.7,-1)  -- (17.7,-2);
\draw (18,-1) -- (18,-2);

\node[text width=0.3cm] at (2.55,-2.4) 
{   \huge $\downarrow$ };
\node[text width=1.5cm] at (2.7,-3) 
{   \large $p_0 = l$ };

\node[text width=0.3cm] at (5.55,-2.4) 
{   \huge $\downarrow$ };
\node[text width=0.3cm] at (5.55,-3) 
{   \large $p_1$ };

\node[text width=0.3cm] at (5.55,-2.4) 
{   \huge $\downarrow$ };
\node[text width=0.3cm] at (5.55,-3) 
{   \large $p_1$ };

\node[text width=0.3cm] at (9.85,-2.4) 
{   \huge $\downarrow$ };
\node[text width=0.3cm] at (9.85,-3) 
{   \large $p_2$ };

\node[text width=0.3cm] at (13.45,-2.4) 
{   \huge $\downarrow$ };
\node[text width=0.3cm] at (13.25,-3) 
{   \large $p_{n^\delta-1}$ };

\node[text width=0.3cm] at (15.75,-2.4) 
{   \huge $\downarrow$ };
\node[text width=0.3cm] at (15.75,-3) 
{   \large $r$ };

\node[text width=0.5cm] at (-0.40,-1.6) 
{   \huge $\of$ };

\draw [pattern=fivepointed stars] (0,-0.1) -- (4.90,-0.1) -- (5.3,-0.9) -- (2.5,-0.9) -- (0,-0.1);
\draw [pattern=bricks] (5.1,-0.1) -- (9.90,-0.1) -- (9.6,-0.9) -- (5.5,-0.9) -- (5.1,-0.1);
\draw [pattern=vertical lines] (13,-0.1) -- (18,-0.1) -- (15.8,-0.9) -- (13.4,-0.9) -- (13,-0.1);

\end{tikzpicture}
\end{center}
\caption{An optimal transformations of the windows of $\on$ into intervals of $\of[l,r]$ is shown in this figure. Gray rectangles illustrate the windows of $s$ and each pattern shows how its corresponding block is transformed into an interval of $\of[l,r]$.}\label{figs:windows}
\end{figure}
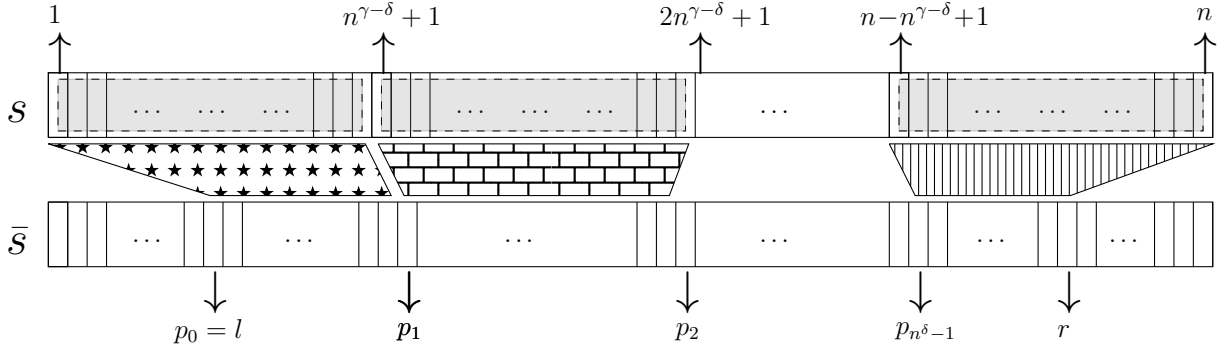

Recall that for each window $\on^*_i$, the substring $\of[l_i,r_i]$ and the distance $d_i$ is a $\big(2^{\lceil \gamma/\delta \rceil}-1\big)$ approximation of the closest substring problem. Therefore by Claim \ref{lem:closetoed} we can use this approximate solution to estimate the edit distance between $\on^*_i$ and other substrings of $\of$. By this claim $d_i + \ed\big(\of[l_i,r_i],\of[p^*_{i-1},p^*_i)\big)$ is a $(2^{\lceil \gamma/\delta \rceil+1}-1)$-approximation for the edit distance between $\on^*_i$ and $\of[p^*_{i-1},p^*_i)$. In specific, 
\begin{align}
\label{eq:appxmapping}
d_i + \ed\big(\of[l_i,r_i],\of[p^*_{i-1},p^*_i)\big) \le \big(2^{\lceil \gamma/\delta \rceil+1}-1\big) \cdot \ed\big(\on^*_i,\of[p^*_{i-1},p^*_i)\big) \,.
\end{align}

For each substring $\of[l^*,r^*]$, Algorithm \ref{alg:nepsilon} iterates over all different mappings between $\on^*_i$ windows and this substring. Note that in order to iterate over all different mappings, we can iterate over the variables $p_0, p_1, \cdots, p_{n^\delta}$ such that
\begin{align*}
l^*=p_0\le p_1 \le \cdots \le p_{n^\delta} = r^*+1 \,,
\end{align*}
 and these variables can be stored in a memory of $\tilde{O}(n^\delta)$. For each different mapping the algorithm estimates the edit distance of each window and the mapped substring using Claim \ref{lem:closetoed}. We claim that for each substring $\of[l^*,r^*]$, the algorithm finds $(2^{\lceil \gamma/\delta \rceil+1}-1)$-approximation of the edit distance between this substring and the online string. To show that consider the optimal mapping $p^*_0, p^*_1, \cdots, p^*_{n^\delta}$, then the distance that algorithm estimates is bounded by
 \begin{align*}
 &\sum_{i=0}^{n^\delta} d_i + \ed\big(\of[l_i,r_i],\of[p^*_{i-1},p^*_i)\big)\\
 &\le \sum_{i=0}^{n^\delta} \big(2^{\lceil \gamma/\delta \rceil+1}-1\big) \cdot \ed\big(\on^*_i,\of[p^*_{i-1},p^*_i)\big) & \text{By (\ref{eq:appxmapping}).} \\
 &=  \big(2^{\lceil \gamma/\delta \rceil+1}-1\big) \sum_{i=0}^{n^\delta} \ed\big(\on^*_i,\of[p^*_{i-1},p^*_i)\big) \\
 &=  \big(2^{\lceil \gamma/\delta \rceil+1}-1\big) \ed(\on,\of[l^*,r^*]) \,. & \text{By (\ref{eq:optimalmapping}).}
 \end{align*}
 Therefore for each substring $\of[l^*,r^*]$, the algorithm finds a $\big(2^{\lceil \gamma/\delta \rceil+1}-1\big)$ approximation of its edit distance to $\on$. Thus, the algorithm finds a $\big(2^{\lceil \gamma/\delta \rceil+1}-1\big)$ approximation of the closest substring problem. This completes the induction and proves the theorem.
\end{proof}

\mainc*
\begin{proof}
By Theorem \ref{thm:alg1}, Algorithm \ref{alg:nepsilon} finds a $O(2^{1/\delta})$ approximation of the closest substring problem. Recall that by Theorem \ref{thm:logtwo}, we can find the edit distance of any two substrings of $\of$ using a very small memory. Therefore by Claim \ref{lem:closetoed}, we can find a $O(2^{1/\delta})$ approximation of the edit distance between $\on$ and $\of$.

Now we show that the memory of Algorithm \ref{alg:nepsilon} is at most $\tilde{O}(n^\delta/\delta)$. While the length of the online string is larger than $n^\delta$, Algorithm \ref{alg:nepsilon} divides the online string into $n^\delta$ windows and recursively solves the closest substring problem for each window. Therefore, by each recursive call the length of the online string will decrease by a multiplicative factor of $n^{-\delta}$. Thus, the maximum depth of the recursive calls is bounded by $O(1/\delta)$. At each call the algorithm acquires a memory of $\tilde{O}(n^\delta)$ which is the memory needed for storing the result of the recursive calls and iterating over all possible mappings. Therefore, the memory of the algorithm is bounded by $\tilde{O}(n^\delta/\delta)$.
\end{proof}
\section{$(1-\epsilon)$-Approximation of LCS}
In this section, we design a streaming algorithm for finding a $(1-\epsilon)$ approximation of the $\LCS$ using $\tilde{O}(\sqrt{n}/\epsilon)$ memory. We first define the $\lcsp$ function as below.

\vspace{3mm}

\begin{tbox}
\texttt{$\lcsp_{l,r}$}

\vspace{1.5mm}

\textbf{Input:}  A position $p$ in $\of$ and a non-negative integer $k$.

\vspace{1.5mm}

\textbf{Output:} The smallest position $q$ such that $\lcs(\of[p,q],\on[l,r]) \ge k$. If no such $q$ exists, the output is $\infty$. 
\end{tbox}

\vspace{3mm}

For a position $p$ in $\of$, a substring $\on[l,r]$ of $\on$, and a non-negative integer $k$, we use $\lcsp_{l,r}(p,k)$ to denote the result of the mentioned function which is the smallest position $q$ such that $\LCS$ of $\of[p,q]$ and $\on[l,r]$ is at least $k$.\footnote{We also define $\lcsp_{l,r}(p,0)$ to be $p-1$.}
It is easy to verify that the $\LCS$ of two strings $\of$ and $\on$ is equal to the largest $k$ such that $\lcsp_{1,n}(1,k) < \infty$. Therefore, instead of solving the $\LCS$ problem, we can solve the $\lcsp_{1,n}$ problem and report the largest $k$ such that $\lcsp_{1,n} (1,k) < \infty$. We start designing our algorithm, by observing some properties of the function $\lcsp$.
\begin{mobsv}
Function $\lcsp_{l,r}$ is non-decreasing on $p$ and $k$. In other words, for every numbers $p_1 \le p_2$ and $k_1 \le k_2$, we have
\begin{align*}
\lcsp_{l,r}(p_1,k_1) \le \lcsp_{l,r}(p_2,k_2) \,.
\end{align*}
\end{mobsv}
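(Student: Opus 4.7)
The plan is a short two-step monotonicity argument that reduces to the trivial fact that enlarging the first string can only increase the LCS, and raising the target only strengthens the constraint. I would first fix $k$ and vary $p$, then fix $p$ and vary $k$, and finally chain the two inequalities together via $\lcsp_{l,r}(p_2, k_1)$.

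For monotonicity in $p$: suppose $p_1 \le p_2$ and let $q = \lcsp_{l,r}(p_2, k)$. If $q = \infty$ there is nothing to show, so assume $q$ is finite. Because $p_1 \le p_2 \le q$, the string $\of[p_2, q]$ is a suffix of $\of[p_1, q]$, hence a subsequence of it. Any common subsequence of $\of[p_2, q]$ and $\on[l, r]$ of length $\ge k$ is therefore also a common subsequence of $\of[p_1, q]$ and $\on[l, r]$, so $\lcs(\of[p_1, q], \on[l, r]) \ge k$. By the minimality in the definition of $\lcsp_{l,r}(p_1, k)$, this gives $\lcsp_{l,r}(p_1, k) \le q = \lcsp_{l,r}(p_2, k)$.

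For monotonicity in $k$: fix $p$ and take $k_1 \le k_2$. Let $q = \lcsp_{l,r}(p, k_2)$; again assume it is finite. Then $\lcs(\of[p, q], \on[l, r]) \ge k_2 \ge k_1$, so $q$ satisfies the defining condition of $\lcsp_{l,r}(p, k_1)$ and the minimality forces $\lcsp_{l,r}(p, k_1) \le q = \lcsp_{l,r}(p, k_2)$. (The boundary case $k_1 = 0$ is handled by the footnote convention $\lcsp_{l,r}(p, 0) = p - 1$, which is trivially $\le \lcsp_{l,r}(p, k_2)$ whenever the latter is at least $p - 1$, and that is immediate since any $q$ witnessing an LCS of length $\ge 1$ satisfies $q \ge p$.)

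Combining the two inequalities,
\[
\lcsp_{l,r}(p_1, k_1) \;\le\; \lcsp_{l,r}(p_2, k_1) \;\le\; \lcsp_{l,r}(p_2, k_2),
\]
which is the claim. There is no real obstacle here; the only thing worth being careful about is the $\infty$ case (monotonicity holds vacuously whenever the right-hand side is $\infty$) and the degenerate $k = 0$ convention, both of which are dispatched in one line each.
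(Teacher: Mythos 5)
Your proof is correct and takes the only natural approach: the paper itself just asserts that the observation ``immediately derives from the definition,'' and your argument is precisely the careful unpacking of that definition (monotonicity in $p$ via the suffix relation $\of[p_2,q] \subseteq \of[p_1,q]$, monotonicity in $k$ via the weakening of the threshold, then chaining). Nothing to object to; you've also correctly handled the $\infty$ and $k=0$ edge cases that the paper leaves implicit.
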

\begin{proof}
It immediately derives from the definition of the function.
\end{proof}

 Consider the function $\lcsp_{l,r}(p,k)$, and let $\on[l,m]$ and $\on[m+1, r]$ be an arbitrary division of the substring $\on[l,r]$ into two substrings. The following claim shows  how we can compute $\lcsp_{l,r}$ from $\lcsp_{l,m}$ and $\lcsp{m+1,r}$.
\begin{claim}
\label{claim:lcspdiv}
For any $k \ge 0$, the following holds.
\begin{align} 
&\lcsp_{l,r}(p,k) = \nonumber \\
		&\min_{\substack{k_1,k_2\ge 0, k_1+k_2=k, \\ \lcsp_{l,m}(p, k_1)< \infty}} \quad \lcsp_{m+1,r}(\lcsp_{l,m}(p,k_1)+1, k_2) \,.
\end{align}
\end{claim}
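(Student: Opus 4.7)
The plan is to prove equality by showing both inequalities $\le$ and $\ge$ between the two sides.

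For the direction $\lcsp_{l,r}(p,k) \le$ RHS, I would fix any feasible pair $(k_1, k_2)$ with $k_1 + k_2 = k$ and $\lcsp_{l,m}(p, k_1) < \infty$, and let $q_1 = \lcsp_{l,m}(p, k_1)$ and $q_2 = \lcsp_{m+1,r}(q_1 + 1, k_2)$. By definition, there exists a common subsequence of $\of[p, q_1]$ and $\on[l, m]$ of length at least $k_1$, and (when $q_2 < \infty$) a common subsequence of $\of[q_1 + 1, q_2]$ and $\on[m+1, r]$ of length at least $k_2$. Since the $\of$-ranges $[p, q_1]$ and $[q_1+1, q_2]$ are disjoint and the $\on$-ranges $[l, m]$ and $[m+1, r]$ are disjoint, concatenating these two subsequences produces a common subsequence of $\of[p, q_2]$ and $\on[l, r]$ of length at least $k$. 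Thus $\lcsp_{l,r}(p, k) \le q_2$, and taking the minimum over all valid $(k_1, k_2)$ gives the claimed inequality.

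For the reverse direction $\lcsp_{l,r}(p,k) \ge$ RHS, let $q^* = \lcsp_{l,r}(p, k)$; the case $q^* = \infty$ is trivial, so assume $q^* < \infty$. Fix a common subsequence $T$ of $\of[p, q^*]$ and $\on[l, r]$ of length exactly $k$, and let $k_1$ be the number of characters of $T$ whose matched positions in $\on$ lie in $[l, m]$, with $k_2 = k - k_1$ characters matched into $[m+1, r]$. Define $q_1$ to be the position in $\of$ of the last of the first $k_1$ characters of $T$ (with the convention $q_1 = p-1$ if $k_1 = 0$, consistent with the footnote $\lcsp_{l,m}(p,0) = p-1$). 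Then the first $k_1$ characters of $T$ witness $\lcsp_{l,m}(p, k_1) \le q_1$, and the remaining $k_2$ characters witness $\lcsp_{m+1,r}(q_1 + 1, k_2) \le q^*$. In particular $\lcsp_{l,m}(p, k_1) < \infty$, so this $(k_1, k_2)$ is feasible in the minimization.

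To finish, I invoke the monotonicity observation (non-decreasing in $p$): since $\lcsp_{l,m}(p, k_1) \le q_1$, we have $\lcsp_{l,m}(p, k_1) + 1 \le q_1 + 1$, hence
\begin{equation*}
\lcsp_{m+1,r}\bigl(\lcsp_{l,m}(p, k_1) + 1,\, k_2\bigr) \;\le\; \lcsp_{m+1,r}(q_1 + 1, k_2) \;\le\; q^*,
\end{equation*}
so the minimum on the right-hand side is at most $q^* = \lcsp_{l,r}(p, k)$, and combining both directions yields the claim. The only mildly delicate points are the boundary conventions (the case $k_1 = 0$ and $q^* = \infty$) and ensuring that a single common subsequence really does split cleanly at the boundary $m$ between $\on[l, m]$ and $\on[m+1, r]$, but this follows directly from the positional nature of subsequence matches.
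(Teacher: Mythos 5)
Your proof is correct and follows essentially the same strategy as the paper's: prove $\le$ by concatenating witnesses for the two sub-problems, and prove $\ge$ by splitting an optimal common subsequence at the boundary $m$ in $\on$ and invoking monotonicity in the first argument. The only cosmetic difference is that you unify the two cases via the $q_1 = p-1$ convention, whereas the paper treats $k_1 = 0$ as a separate case.
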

\begin{proof}
For any $k_1, k_2 \ge 0$ such that $k=k_1+k_2$ and $\lcsp_{l,m}(p, k_1)< \infty$, the value of $\lcsp_{m+1,r}(\lcsp_{l,m}(p,k_1)+1, k_2)$ indicates the ending of a common subsequence of size $k$ such that exactly $k_1$ characters from $\on[l,m]$ are in this common subsequence and $k_2$ characters from $\on[m+1,r]$ are in this subsequence. Therefore, we always have
\begin{align*} 
&\lcsp_{l,r}(p,k) \le \nonumber \\
		&\min_{\substack{k_1,k_2\ge 0, k_1+k_2=k, \\ \lcsp_{l,m}(p, k_1)< \infty}} \quad \lcsp_{m+1,r}(\lcsp_{l,m}(p,k_1)+1, k_2) \,.
\end{align*}
In order to complete the proof of the claim, we show that there always exists $k_1$ and $k_2$ such that $k_1+k_2 =k$ and $\lcsp_{m+1,r}(\lcsp_{l,m}(p,k_1)+1, k_2) \le \lcsp_{l,r}(p,k)$.
 
Consider an optimal solution of the function $\lcsp_{l,r}(p,k)$, and let suppose that $q=\lcsp_{l,r}(p,k)$. In this solution there exists a common subsequence of size $k$ between the characters in $\on[l,r]$ and $\of[p,q]$. Let suppose that in that solution character $\on[a_i]$ is matched to $\of[b_i]$ for each $1 \le i \le k$. W.l.o.g., we can assume
\begin{align*}
l \le a_1 < a_2 < \cdots < a_k \le r \,. 
\end{align*}
It also implies that
\begin{align*}
p \le b_1 < b_2 < \cdots < b_k=q \,.
\end{align*}
We consider two different cases. The first case is when all indices $a_i$ are larger than $m$. In this case $k$ characters from $\on[m+1,r]$ are matched to $\of[p,q]$. Therefore, $\lcsp_{m+1,r}(p,k) \le q$. By setting $k_1=0$ and $k_2=k$, we get
\begin{align*}
\lcsp_{m+1,r}(\lcsp_{l,m}(p,k_1)+1, k_2) &= \lcsp_{m+1,r}((p-1)+1, k_2) \\
&= \lcsp_{m+1,r}(p, k) \\
&\le q \,. 
\end{align*}

The other case is when for at least one $a_i$, we have $a_i \le m$. Let assume that $k_1$ the largest number such that $a_{k_1}$ is at most $m$.
Then, in optimal solution of $\lcsp_{l,r}(p,k)$ exactly $k_1$ characters from $\on[l,m]$ are matched to the characters in $\of[p,b_{k_1}]$. 
Therefore, we have
\begin{align}
\label{eq:lcspd1}
\lcsp_{l,m}(p, k_1) \le b_{k_1} \,.
\end{align}

We also know that there are $k_2=k-k_1$ characters from $\on[m+1,r]$ that are matched to the characters in $\of[b_{k_1}+1,q]$. Therefore we have

\begin{align}
\label{lcspd2}
\lcsp_{m+1,r}(b_{k_1}+1, k_2) \le q \,.
\end{align}

Thus, 
\begin{align*}
&\lcsp_{m+1,r}(\lcsp_{l,m}(p,k_1)+1, k_2) \\
& \le \lcsp_{m+1,r}(b_{k_1}+1, k_2) & \text{By (\ref{eq:lcspd1}).} \\
& \le q & \text{By (\ref{lcspd2}).} \\
& = \lcsp_{l,r}(p,k) \,,
\end{align*}
which proves the claim.
\end{proof}

\begin{algorithm} [h]
 \KwData{An offline string $\of$ of length $n$, a stream of characters of the online string $\on$, and an $\epsilon^*>0$.}
 \begin{algorithmic} [1]
 \STATE Divide $\on$ into $\sqrt{n}$ windows $\on^*_1, \on^*_2, \ldots, \on^*_{\sqrt{n}}$ of size $\sqrt{n}$.
 \STATE $D \leftarrow$ an array of size $\lfloor\log_{1+\epsilon^*} n \rfloor$ initially containing $\infty$ in all cells.
 \FOR { $i \in \big[\sqrt{n}\big]$}
 	\STATE $T \leftarrow$ an array of size $\lfloor\log_{1+\epsilon^*} n \rfloor$ initially containing $\infty$ in all cells.
 	\FOR {$0 \le k \le \lfloor\log_{1+\epsilon^*} n \rfloor$}
 		\STATE $T[k] \leftarrow \lcsp_{(i-1) \sqrt{n}+1, i \sqrt{n}} \big(1,\lfloor (1+\epsilon^*)^k \rfloor\big)$.
 		\FOR {$0 \le k_1 \le k$}
 			\IF {$D[k_1] < \infty$}
 				\STATE Find $\lcsp_{(i-1) \sqrt{n}+1, i \sqrt{n}} \big(D[k_1]+1, \lfloor (1+\epsilon^*)^k \rfloor- \lfloor(1+\epsilon^*)^{k_1}\rfloor \big)$ using any offline algorithm. Let $q$ be this result.
 				\STATE $T[k] \leftarrow \min\big\{T[k], q\big\}$.
 			\ENDIF		
 		\ENDFOR 
 	\ENDFOR
 	\STATE $D \leftarrow T$.
 \ENDFOR
 \RETURN The largest value $\lfloor (1+\epsilon^*)^k \rfloor$ such that $D[k] < \infty$. 
 \STATE \quad \textbf{return} $0$ if no such $k$ exists.
 \end{algorithmic}
\caption{Algorithm $\lcsrootn$ for approximating the $\LCS$.}
\label{alg:lcsrootn}
\end{algorithm}

Algorithm \ref{alg:lcsrootn} first divides the online string into $\sqrt{n}$ windows of equal sizes. We assume w.l.o.g., that length of the strings is divisible by $\sqrt{n}$. Otherwise we can always pad offline and online strings with different characters that are not in $\Sigma$ such that their new length get divisible by $\sqrt{n}$. 
The algorithm divides $\on$ into $\sqrt{n}$ windows $\on^*_1,\on^*_2, \cdots, \on^*_{\sqrt{n}}$ each with the size of $\sqrt{n}$ where $\on^*_i$ is the substring $\on[(i-1)\sqrt{n}+1, i\sqrt{n}]$. Given an $\epsilon^*>0$, the algorithm keeps an array $D$ of the size $\lfloor \log_{1+\epsilon^*} n \rfloor$ where $D[k]$ is an estimation of $\lcsp(1,\lfloor (1+\epsilon^*)^k \rfloor)$ in the subsequence of the online string that has arrived so far in the stream. Specifically, after arrival of the window $\on^*_i$ in the stream, the algorithm keeps an estimation of $\lcsp_{1,i \sqrt{n}} (1, \lfloor (1+\epsilon^*)^k \rfloor)$ in $D[k]$. First we show that how the algorithm can update the array $D$ upon arrival of a new window, and after that we demonstrate the approximation guarantee of our method.

Let assume that we have an array $D$ in which $D[k]$ is an approximation of $\lcsp_{1, (i-1) \sqrt{n}} \big(1, \lfloor (1+\epsilon^*)^k\rfloor\big)$ for different values of $0 \le k \le \lfloor \log_{1+\epsilon^*} n \rfloor$. Upon arrival of a new window $\on^*_i$, the algorithm has to update the array $D$. Suppose that we want to find $\lcsp_{1, i \sqrt{n}} \big(1, \lfloor(1+\epsilon^*)^k\rfloor \big)$. According to Claim \ref{claim:lcspdiv}, there are integers $k_1^*$ and $k_2^*$ such that $k_1^*+k_2^* = \lfloor(1+\epsilon^*)^k \rfloor$ and
\begin{align}
\label{eq:nextblock}
\lcsp_{1,i \sqrt{n}} (1,\lfloor(1+\epsilon^*)^k \rfloor) = \lcsp_{(i-1) \sqrt{n}+1, i \sqrt{n}} (\lcsp_{1,(i-1)\sqrt{n}}(1,k^*_1)+1,k^*_2) \,.
\end{align}

The algorithm stores all of the characters of $\on^*_i$ in the memory. Therefore, for every $p$ and $k$ we can compute the function $\lcsp_{(i-1)\sqrt{n}+1,i \sqrt{n}} (p,k)$ using only poly-logarithmic extra memory (see Theorem \ref{thm:logtwo}). In order to update the array $D$, the algorithm iterates over all $k^*_1$ such that $k^*_1$ is a power of $(1+\epsilon^*)$ and pick the one that minimizes the r.h.s. of (\ref{eq:nextblock}). Specifically, let $T$ an array of length $\lfloor \log_{1+\epsilon^*} n \rfloor$ which represents the updated estimates after arrival of $\on^*_i$. Initially for each $k$ we set
\begin{align*}
T[k]=\lcsp_{(i-1)\sqrt{n}+1,i \sqrt{n}} \big(1,\lfloor(1+\epsilon^*)^k\rfloor\big) \,,
\end{align*}
which represents the case that all characters in the optimal solution of $\lcsp_{1, i \sqrt{n}} \big(1, \lfloor(1+\epsilon^*)\rfloor^k\big)$ are from the window $\on^*_i$, i.e., when $k^*_1$ is zero in (\ref{eq:nextblock}). Then the algorithm considers values of $k^*_1$ such that $k^*_1$ is a power of $(1+\epsilon^*)$, i.e., we have $k^*_1 = \lfloor(1+\epsilon^*)^{k_1} \rfloor$ for some integer $k_1$. Recall that our algorithm makes sure that that $D[k]$ is an approximation of $\lcsp_{1, (i-1) \sqrt{n}} \big(1, \lfloor (1+\epsilon^*)^k \rfloor \big)$. Therefore we can approximate the r.h.s. of (\ref{eq:nextblock}) for $k^*_1 = \lfloor(1+\epsilon^*)^{k_1} \rfloor$ by computing
\begin{align*}
\lcsp_{(i-1) \sqrt{n}+1, i \sqrt{n}} (D[k_1]+1,k^*_2) \,,
\end{align*}
where $k^*_2= \lfloor (1+\epsilon^*)^k \rfloor - k^*_1$. In our algorithm we compute the value above for all different value of $k^*_1$ and set the $T[k]$ equal to minimum of these values. In other words, by the end of the arrival of the window $\on^*_i$, we have
\begin{align}
\label{eq:tk}
T[k]= \min\left\{\begin{array}{lr}
        \lcsp_{(i-1)\sqrt{n}+1,i \sqrt{n}} \big(1,\lfloor (1+\epsilon^*)^k \rfloor \big) \,, \\
        \min_{\substack{k^*_1,k^*_2\ge 0, k^*_1+k^*_2=\lfloor (1+\epsilon^*)^k \rfloor , \\ k^*_1=\lfloor (1+\epsilon^*)^{k_1} \rfloor, \\ D[k_1]< \infty}} \quad \lcsp_{(i-1)\sqrt{n}+1,i \sqrt{n}}(D[k_1]+1, k^*_2)
        \end{array}\right\} \,.
\end{align}
After computing the values in the array $T$, we can replace values in the array $D$ with the values in $T$, and update the array $D$.

In order to provide an approximation guarantee for our algorithm, we first prove the following claim.
\begin{claim}
\label{claim:di}
Let $D_i$ be the array $D$ after arrival of the window $\on^*_i$, then for each $1 \le k^* \le n$, there exists a $0 \le k \le \lfloor \log_{1+\epsilon^*} n \rfloor$ such that
\begin{align*}
k^*(1-\epsilon^*)^i \le \lfloor(1+\epsilon^*)^k \rfloor \le k^* \,, 
\end{align*}
and,
\begin{align*}
D_i[k] \le \lcsp_{1, i \sqrt{n}} (1, k^*) \,.
\end{align*}
\end{claim}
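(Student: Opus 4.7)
The plan is to proceed by induction on $i$, with the key technical input being Claim \ref{claim:lcspdiv} (to decompose the LCS over the freshly arrived window) together with the monotonicity of $\lcsp$ in both arguments. The base case $i=0$ is vacuous since $\lcsp_{1,0}(1, k^*) = \infty$ for every $k^* \ge 1$; equivalently one can take $i=1$ as the first nontrivial case, which is handled identically to the inductive step below by setting $k^*_1 = 0$ and using only the first branch of equation (\ref{eq:tk}).

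For the inductive step, fix $k^*$ and apply Claim \ref{claim:lcspdiv} to obtain $k^*_1, k^*_2 \ge 0$ with $k^*_1 + k^*_2 = k^*$ realizing
\[
\lcsp_{1,\,i\sqrt{n}}(1, k^*) \;=\; \lcsp_{(i-1)\sqrt{n}+1,\,i\sqrt{n}}\bigl(\lcsp_{1,(i-1)\sqrt{n}}(1, k^*_1) + 1,\, k^*_2\bigr).
\]
When $k^*_1 \ge 1$, the inductive hypothesis applied to $k^*_1$ supplies an index $k_1$ with $k^*_1(1-\epsilon^*)^{i-1} \le \lfloor(1+\epsilon^*)^{k_1}\rfloor \le k^*_1$ and $D_{i-1}[k_1] \le \lcsp_{1,(i-1)\sqrt{n}}(1, k^*_1)$ (the case $k^*_1=0$ is the first branch of (\ref{eq:tk})). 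I would then define $k$ to be the largest integer with $\lfloor(1+\epsilon^*)^k\rfloor \le \lfloor(1+\epsilon^*)^{k_1}\rfloor + k^*_2$. Monotonicity of $\lcsp$ in its first argument (applied to $D_{i-1}[k_1] \le \lcsp_{1,(i-1)\sqrt{n}}(1,k^*_1)$) and in its second argument (applied to $\lfloor(1+\epsilon^*)^k\rfloor - \lfloor(1+\epsilon^*)^{k_1}\rfloor \le k^*_2$) will show that the $k_1$-summand inside the minimum in (\ref{eq:tk}) is at most $\lcsp_{1,\,i\sqrt{n}}(1, k^*)$, and hence $D_i[k] = T[k]$ inherits the same bound.

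It remains to sandwich $\lfloor(1+\epsilon^*)^k\rfloor$ inside $[\,k^*(1-\epsilon^*)^i,\,k^*]$. The upper bound is immediate: $\lfloor(1+\epsilon^*)^k\rfloor \le \lfloor(1+\epsilon^*)^{k_1}\rfloor + k^*_2 \le k^*_1 + k^*_2 = k^*$. For the lower bound, maximality of $k$ together with integrality of both sides of the defining constraint forces $\lfloor(1+\epsilon^*)^{k+1}\rfloor \ge \lfloor(1+\epsilon^*)^{k_1}\rfloor + k^*_2 + 1$, and then using $1/(1+\epsilon^*)\ge 1-\epsilon^*$ one gets $(1+\epsilon^*)^k \ge (1-\epsilon^*)\bigl[k^*_1(1-\epsilon^*)^{i-1} + k^*_2 + 1\bigr]$. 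The weighted inequality $k^*_1(1-\epsilon^*)^i + k^*_2(1-\epsilon^*) \ge k^*(1-\epsilon^*)^i$ then delivers the target bound.

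The main obstacle is the floor bookkeeping: a naive composition would lose an additive constant per level and thus blow the approximation past the advertised $(1-\epsilon^*)^i$. The critical trick is to exploit the strict inequality in the maximality step together with the fact that both sides there are integers, which contributes exactly the additive $+1$ in the numerator above and absorbs the loss of at most $1$ incurred in passing from $(1+\epsilon^*)^k$ to its floor. Once this geometric/integral balancing is in place, the remainder is a clean composition of Claim \ref{claim:lcspdiv}, the induction hypothesis, and the monotonicity of $\lcsp$.
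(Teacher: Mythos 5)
Your high-level route matches the paper's proof exactly: induction on $i$, Claim~\ref{claim:lcspdiv} to split off the newly arrived block, the inductive hypothesis applied to $k^*_1$ to produce $k_1$, the quantity $k' = \lfloor(1+\epsilon^*)^{k_1}\rfloor + k^*_2$, and monotonicity of $\lcsp$ in both arguments to compare the $k_1$-summand in~(\ref{eq:tk}) against $\lcsp_{1,i\sqrt n}(1,k^*)$. Where the paper simply writes ``let $k$ be an integer such that $\lfloor(1+\epsilon^*)^k\rfloor$ is between $k'(1-\epsilon^*)$ and $k'$'' (leaving existence unjustified), you pin $k$ down as the largest integer with $\lfloor(1+\epsilon^*)^k\rfloor \le k'$, which is the natural candidate.

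The problem is that the ``absorption'' you describe does not close. From maximality you get $\lfloor(1+\epsilon^*)^{k+1}\rfloor \ge k'+1$, hence $(1+\epsilon^*)^k \ge (k'+1)/(1+\epsilon^*)$. After dividing by $1+\epsilon^*$ the integer ``$+1$'' contributes only $1/(1+\epsilon^*) < 1$, so the best your chain yields is $\lfloor(1+\epsilon^*)^k\rfloor > (1+\epsilon^*)^k - 1 \ge k'(1-\epsilon^*) - \epsilon^*/(1+\epsilon^*)$, and consequently $\lfloor(1+\epsilon^*)^k\rfloor > k^*(1-\epsilon^*)^i - \epsilon^*/(1+\epsilon^*)$, which is strictly weaker than the claimed $\ge k^*(1-\epsilon^*)^i$. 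Integrality of $\lfloor(1+\epsilon^*)^k\rfloor$ cannot bridge the remaining $\epsilon^*/(1+\epsilon^*)$ because $k^*(1-\epsilon^*)^i$ is generically non-integer, so the residual can land in the bad sub-unit window. A correct repair needs a case split: when $k'\epsilon^* \ge 1$ the identity $(k'+1)/(1+\epsilon^*) - 1 = k'(1-\epsilon^*) + (k'(\epsilon^*)^2 - \epsilon^*)/(1+\epsilon^*)$ shows the extra $k'(\epsilon^*)^2$ term covers the loss; when $k'\epsilon^* < 1$ consecutive values of $\lfloor(1+\epsilon^*)^k\rfloor$ step by at most $1$, so some $k$ attains $\lfloor(1+\epsilon^*)^k\rfloor = k'$ exactly and the lower bound is trivial. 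The paper does not supply this argument either, so you identified a real loose end, but the particular justification you give is off by an additive $\epsilon^*$ and does not go through as written.
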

\begin{proof}
We prove the claim by induction on $i$ which represents the number of windows that have arrived in the stream. For $i=1$, the algorithm finds the exact solution of $\lcsp_{1,\sqrt{n}} (1, \lfloor(1+\epsilon^*)^k \rfloor)$ for all $0 \le k \le \lfloor \log_{1+\epsilon^*} n \rfloor$. Consider an integer $1 \le k^* \le n$, then there exists some number with the form of $\lfloor (1+\epsilon^*)^k \rfloor$ between $k^*/(1+\epsilon^*)$ and $k^*$. Let $\lfloor (1+\epsilon^*)^k\rfloor$ be that number. Then we have,
\begin{align*}
D_1[k]=\lcsp_{1,\sqrt{n}} (1, \lfloor (1+\epsilon^*)^k\rfloor) \le \lcsp_{1,\sqrt{n}} (1, k^*) \,.
\end{align*}
We also have
\begin{align*}
k^*(1-\epsilon^*) \le \frac{k^*}{1+\epsilon^*} \le \lfloor (1+\epsilon^*)^k\rfloor \le k^* \,,
\end{align*}
which proves the claim for $i=1$.

Now consider an $i>1$, and a $1 \le k^* \le n$. If $\lcsp_{1, i \sqrt{n}} (1, k^*)$ is $\infty$, then the claim clearly holds. Otherwise we can assume $\lcsp_{1, i \sqrt{n}} (1, k^*) = q$ where $q < \infty$.
 By (\ref{eq:tk}) and the way our algorithm computes the array $D_{i}$ we have
 \begin{align}
\label{eq:di}
D_i[k]= \min\left\{\begin{array}{lr}
        \lcsp_{(i-1)\sqrt{n}+1,i \sqrt{n}} \big(1,\lfloor (1+\epsilon^*)^k\rfloor\big) \,, \\
        \min_{\substack{k^*_1,k^*_2\ge 0, k^*_1+k^*_2=\lfloor(1+\epsilon^*)^k \rfloor, \\ k^*_1=\lfloor(1+\epsilon^*)^{k_1} \rfloor, \\ D_{i-1}[k_1]< \infty}} \quad \lcsp_{(i-1)\sqrt{n}+1,i \sqrt{n}}(D_{i-1}[k_1]+1, k^*_2)
        \end{array}\right\} \,.
\end{align}
  By Claim \ref{claim:lcspdiv}, there exists integers $k^*_1,k^*_2 \ge 0$ such that $k^*_1+k^*_2 = k^*$ and
\begin{align*}
\lcsp_{1,i \sqrt{n}}(1,k^*) = \lcsp_{(i-1)\sqrt{n}+1 ,i \sqrt{n}}(\lcsp_{1,(i-1)\sqrt{n}}(1,k^*_1)+1, k^*_2) \,.
\end{align*} 
Let $q_1 = \lcsp_{1,(i-1)\sqrt{n}}(1,k^*_1)$, then we have
\begin{align}
\label{eq:prvq1}
\lcsp_{1,i \sqrt{n}}(1,k^*) =  \lcsp_{(i-1)\sqrt{n}+1 ,i \sqrt{n}}(q_1+1, k^*_2) \,.
\end{align} 
We consider two different cases on $k^*_1$.
\begin{itemize}
\item The first case is when $k^*_1=0$. In that case we have $q_1=0$, and by (\ref{eq:prvq1}) we have
\begin{align}
\label{eq:allinlastblock}
\lcsp_{1,i \sqrt{n}}(1,k^*) =  \lcsp_{(i-1)\sqrt{n}+1 ,i \sqrt{n}}(1, k^*) \,.
\end{align}
Then there exists some number with the form of  $\lfloor(1+\epsilon^*)^k\rfloor$  between $k^*(1-\epsilon^*)$ and $k^*$ and by (\ref{eq:di}) we have
\begin{align*}
D_i[k] &\le \lcsp_{(i-1)\sqrt{n}+1 ,i \sqrt{n}}(1, \lfloor(1+\epsilon^*)^k \rfloor) \\
&\le \lcsp_{(i-1)\sqrt{n}+1 ,i \sqrt{n}}(1, k^*) \\
&= \lcsp_{1,i \sqrt{n}}(1,k^*) \,, &\text{By (\ref{eq:allinlastblock}).} 
\end{align*}
which proves the claim for this case.
\item The other case is when $k^*_1> 0$. In this case, in the optimal solution of $\lcsp_{1,i \sqrt{n}}(1,k^*)$, exactly $k^*_1$ characters from $\on[1,(i-1)\sqrt{n}]$ are matched to the characters in $\of[1,q_1]$. By the induction hypothesis, we know there exists some $k_1$ such that
\begin{align}
\label{eq:k1bound}
k^*_1(1- \epsilon^*)^{i-1} \le \lfloor(1+\epsilon^*)^{k_1} \rfloor \le k^*_1 \,, 
\end{align}
and
\begin{align}
\label{eq:pdi}
D_{i-1}[k_1] \le \lcsp_{1,(i-1)\sqrt{n}}(1,k^*_1) = q_1 \,.
\end{align}
Let 
$k'=\lfloor(1+\epsilon^*)^{k_1} \rfloor+k^*_2$.
 Then, we have
\begin{align*}
k' &\ge k^*(1- \epsilon^*)^{i-1} \,. & \text{By (\ref{eq:k1bound}).}
\end{align*}
Let $k$ an integer such that $\lfloor(1+\epsilon^*)^k \rfloor$ is between $k'(1-\epsilon^*)$ and $k'$. We show that $D_i[k]$ satisfies the claim conditions. From the previous equation, we have
\begin{align*}
\lfloor(1+\epsilon^*)^k \rfloor &\ge k'(1-\epsilon^*) \ge k^* (1- \epsilon^*)^{i} \,.
\end{align*}

Let $k_2 = \lfloor(1+\epsilon^*)^k \rfloor- \lfloor(1+\epsilon^*)^{k_1} \rfloor$. Then we have,
\begin{align}
\label{eq:newk2}
k_2 & = \lfloor(1+\epsilon^*)^k \rfloor- \lfloor(1+\epsilon^*)^{k_1} \rfloor\nonumber \\
& \le k' -\lfloor(1+\epsilon^*)^{k_1}\rfloor \nonumber & \text{Since $\lfloor(1+\epsilon^*)^k\rfloor \le k'$.} \\
& = k^*_2 \,.& \text{Since $ k'=\lfloor(1+\epsilon^*)^{k_1} \rfloor+k^*_2$.}
\end{align}
 By (\ref{eq:di}), we have
\begin{align*}
D_i[k] &\le \lcsp_{(i-1)\sqrt{n}+1,i \sqrt{n}}(D_{i-1}[k_1]+1, k_2) \\
& \le \lcsp_{(i-1)\sqrt{n}+1,i \sqrt{n}}(q_1+1, k_2) & \text{By (\ref{eq:pdi}).} \\
& \le \lcsp_{(i-1)\sqrt{n}+1,i \sqrt{n}}(q_1+1, k^*_2) & \text{By (\ref{eq:newk2}).} \\
& = \lcsp_{1,i \sqrt{n}}(1,k^*) \,. & \text{By (\ref{eq:prvq1}).}
\end{align*}
This proves the second case and completes the induction and proves the claim.
\end{itemize}
\end{proof}
\begin{theorem}
\label{thm:m2p}
For any $\epsilon^*>0$, Algorithm \ref{alg:lcsrootn} finds a $(1-\epsilon^*)^{\sqrt{n}}$ approximation of the $\LCS$ between $\of$ and $\on$ using $\tilde{O}(\sqrt{n}+\log_{1+\epsilon^*} n)$ memory.
\end{theorem}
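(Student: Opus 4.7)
The plan is to derive both parts of the conclusion (approximation quality and memory) almost immediately from Claim~\ref{claim:di}, together with a careful accounting of what is actually stored during one pass of Algorithm~\ref{alg:lcsrootn}.

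First I would establish the approximation ratio. Set $k^{\star} := \lcs(\of,\on)$. Because a common subsequence of length $k^{\star}$ exists, $\lcsp_{1,n}(1,k^{\star}) \le n < \infty$. Apply Claim~\ref{claim:di} with $i=\sqrt{n}$ and this $k^{\star}$: there is an index $k$ with
\[
k^{\star}(1-\epsilon^{*})^{\sqrt{n}} \;\le\; \lfloor(1+\epsilon^{*})^{k}\rfloor \;\le\; k^{\star},
\qquad
D_{\sqrt{n}}[k] \;\le\; \lcsp_{1,n}(1,k^{\star}) \;<\; \infty .
\]
Since $D_{\sqrt{n}}[k]<\infty$, the value returned by the algorithm, namely the largest $\lfloor(1+\epsilon^{*})^{k'}\rfloor$ with $D_{\sqrt{n}}[k']<\infty$, is at least $\lfloor(1+\epsilon^{*})^{k}\rfloor \ge (1-\epsilon^{*})^{\sqrt{n}}\cdot\lcs(\of,\on)$. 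Conversely, every finite $D_{\sqrt{n}}[k']$ is by construction an actual value of $\lcsp_{1,n}(1,\lfloor(1+\epsilon^{*})^{k'}\rfloor)$, so its existence witnesses a real common subsequence of the claimed length; hence the output never exceeds $\lcs(\of,\on)$. Together, this gives the required $(1-\epsilon^{*})^{\sqrt{n}}$ approximation.

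Next I would tally memory. At any moment the algorithm maintains: (a) the current window $\on^{*}_{i}$ of $\sqrt{n}$ characters, occupying $\tilde O(\sqrt{n})$ bits; (b) the arrays $D$ and $T$, each of length $\lfloor\log_{1+\epsilon^{*}} n\rfloor$ with entries bounded by $n$, hence $\tilde O(\log_{1+\epsilon^{*}} n)$ bits; and (c) the scratch space needed to evaluate a single call $\lcsp_{(i-1)\sqrt{n}+1,i\sqrt{n}}(p,k)$. For the last item, we have random access both to $\of$ and to the window $\on^{*}_{i}$ stored in memory, so by Theorem~\ref{thm:logtwo} (and its natural LCS-position variant, which is obtained by the same Savitch-style recursion) such a call uses only $O(\log^{2} n)$ additional memory. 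Summing, the total working space is $\tilde O(\sqrt{n}+\log_{1+\epsilon^{*}} n)$ as claimed.

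I do not anticipate a serious obstacle: all the combinatorial work is already done inside Claim~\ref{claim:di}. The only minor point to verify is the two-sided bound on the returned quantity (both $\le \lcs$ and $\ge (1-\epsilon^{*})^{\sqrt{n}}\lcs$), which follows from the semantic meaning of the entries $D[\,\cdot\,]$ together with Claim~\ref{claim:di}; the rest is a direct inventory of what the algorithm keeps in memory.
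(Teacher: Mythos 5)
Your proof is correct and follows essentially the same route as the paper's: invoke Claim~\ref{claim:di} at $i=\sqrt{n}$ with $k^*=\lcs(\of,\on)$ to get a finite $D_{\sqrt n}[k]$ at a scale within a $(1-\epsilon^*)^{\sqrt n}$ factor of the optimum, then tally the window, the $D,T$ arrays, and the Savitch-style scratch space for the $\lcsp$ subcalls. You also add a brief check that the output never exceeds the true $\LCS$, which the paper leaves implicit; be aware your phrasing ``is by construction an actual value of $\lcsp_{1,n}$'' is slightly off (the algorithm only explores a subset of breakpoints, so $D[k']$ is in general an upper bound on $\lcsp_{1,n}(1,\lfloor(1+\epsilon^*)^{k'}\rfloor)$, not equal to it), but this still certifies the existence of a common subsequence of the stated length, so the conclusion stands.
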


\begin{proof}
Let $\opt$ be the size of $\LCS$ between $\of$ and $\on$. Then $\opt$ is the largest $k$ such that $\lcsp_{1,n} (1,k) < \infty$. Our algorithm approximately computes the function $\lcsp_{1,n}$ and return the largest $k$ such that $\lcsp_{1,n} (1,k) < \infty$. By Claim \ref{claim:di}, in the final array $D$ computed by the algorithm there exists an integer $k$ such that 
\begin{align}
\label{eq:lcsapx1}
\lfloor(1+\epsilon^*)^k\rfloor \ge \opt (1-\epsilon^*)^{\sqrt{n}} \,,
\end{align}
and
\begin{align*}
D[k] \le \lcsp_{1,n}(1,\opt) < \infty \,.
\end{align*}
Therefore, the answer returned by the algorithm is at least $\lfloor(1+\epsilon^*)^k\rfloor$. By (\ref{eq:lcsapx1}) it gives us a $(1-\epsilon^*)^{\sqrt{n}}$ approximation.

To show the memory bound of Algorithm \ref{alg:lcsrootn}, observe that the algorithm needs a memory of $\tilde{O}(\sqrt{n})$ to store each window $\on^*_i$ and compute the $\LCS$ between a substring of this window and a substring of the offline string (using Theorem \ref{thm:logtwo}). Also, the algorithm keeps an array $D$ and $T$ of size $\lfloor \log_{1+\epsilon^*} n \rfloor$. Therefore, the memory of the algorithm is bounded by $\tilde{O}(\sqrt{n}+\log_{1+\epsilon^*} n)$.
 
\end{proof}

\begin{theorem}
\label{thm:mc3}
There exists a single-pass deterministic streaming algorithm that finds a $(1-\epsilon)$ approximation of the $\LCS$ between $\of$ and $\on$ using $\tilde{O}(\sqrt{n}/\epsilon)$ memory.
\end{theorem}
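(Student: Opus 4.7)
The plan is to obtain Theorem~\ref{thm:mc3} as a direct corollary of Theorem~\ref{thm:m2p} by choosing the parameter $\epsilon^*$ appropriately. Theorem~\ref{thm:m2p} gives a $(1-\epsilon^*)^{\sqrt n}$ approximation with memory $\tilde O(\sqrt n + \log_{1+\epsilon^*} n)$, so the task reduces to picking $\epsilon^*$ small enough that $(1-\epsilon^*)^{\sqrt n} \ge 1-\epsilon$, while keeping the logarithmic term under control.

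Specifically, I would set $\epsilon^* = \epsilon/\sqrt n$ and feed it into Algorithm~\ref{alg:lcsrootn}. To bound the approximation ratio, I would invoke Bernoulli's inequality:
\[
(1-\epsilon^*)^{\sqrt n} \;=\; \bigl(1-\epsilon/\sqrt n\bigr)^{\sqrt n} \;\ge\; 1 - \sqrt n \cdot \frac{\epsilon}{\sqrt n} \;=\; 1 - \epsilon,
\]
which is exactly the guarantee we want.

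For the memory, I would use the standard estimate $\log_{1+\epsilon^*} n = \Theta(\log n/\epsilon^*)$ valid for small $\epsilon^*$, so
\[
\log_{1+\epsilon/\sqrt n} n \;=\; O\!\left(\frac{\sqrt n \log n}{\epsilon}\right).
\]
Plugging this into the bound of Theorem~\ref{thm:m2p} gives total memory $\tilde O\bigl(\sqrt n + \sqrt n \log n / \epsilon\bigr) = \tilde O(\sqrt n/\epsilon)$, as desired.

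There is no real obstacle here; the only thing to be careful about is that Bernoulli's inequality is applied in its correct direction and that we handle the case $\epsilon^* \ge 1$ (e.g.\ trivially large $\epsilon$) separately, which is not interesting. Since Algorithm~\ref{alg:lcsrootn} is already single-pass and deterministic by construction, these properties are inherited for free.
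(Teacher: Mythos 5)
Your proposal is correct and matches the paper's own proof essentially verbatim: the paper also sets $\epsilon^* = \epsilon/\sqrt{n}$, applies Bernoulli's inequality $(1-\epsilon^*)^{\sqrt{n}} \ge 1 - \epsilon^*\sqrt{n} = 1-\epsilon$, and bounds the memory by $\tilde{O}(\sqrt{n} + \log_{1+\epsilon^*} n) = \tilde{O}(\sqrt{n}/\epsilon)$. The only difference is that you spell out the estimate $\log_{1+\epsilon^*} n = \Theta(\log n / \epsilon^*)$ explicitly, which the paper leaves implicit.
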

\begin{proof}
By setting $\epsilon^*=\epsilon/\sqrt{n}$, Theorem \ref{thm:m2p} immediately gives us an algorithm with the approximation ratio of 
\begin{align*}
(1-\epsilon^*) ^{\sqrt{n}} \ge 1- \epsilon^* \cdot \sqrt{n}  = 1- \epsilon \,.
\end{align*}
Also, the memory of this algorithm is bounded by
\begin{align*}
\tilde{O}(\sqrt{n}+\log_{1+\epsilon^*} n) = \tilde{O}(\sqrt{n}/ \epsilon) \,.
\end{align*}
\end{proof}
\section{$(1+\epsilon)$-Approximation of ED}\label{sec:rootn}
In this section, we design a streaming algorithm that finds a $(1+\epsilon)$ approximation of the edit distance for an arbitrary $\epsilon>0$. The memory of our algorithm is $\tilde{O}(\sqrt{n}/\epsilon)$. Our algorithm is inspired by the algorithm of \cite{hss19} for approximating the edit distane in the Massively Parallel Compution (MPC) model.

Suppose that we are given a distance $d$, and we want to verify whether the edit distance between $\of$ and $\on$ is close to $d$ or not. If we can solve this subproblem, we can also find an approximation of the edit distance between $\of$ and $\on$. In order to do that, we can run the algorithm for different values of $d$ in $\{1,\lfloor (1+\epsilon) \rfloor,\lfloor (1+\epsilon)^2 \rfloor, \cdots \}$ and return the minimum $d$ that our algorithm verifies it is close to the edit distance between $\of$ and $\on$. The number of guesses for $d$ is also bounded by $O(\log_{1+\epsilon}(n))$ and we can run the algorithm for all different guesses of $d$ in parallel and return the best answer. Thus, our goal in the rest of the section is to design a streaming algorithm that given an approximate size of the edit distance, verifies whether a solution with that size exists. 

Similar to our algorithm for $\LCS$, we divide the online string into $\sqrt{n}$ windows of size $\sqrt{n}$. For simplicity and without loss of generality, we assume that the length of the string is divisible by $\sqrt{n}$ (Otherwise we can pad both online and offline strings with the same character which is not in $\Sigma$ and this does not change the edit distance). Our algorithm divides the online string into $\sqrt{n}$ windows $\on^*_1, \on^*_2, \cdots, \on^*_{\sqrt{n}}$ where $\on^*_i$ is the substring $\on[(i-1)\sqrt{n}+1, i \sqrt{n}]$ of the online string. Let assume that in the optimal solution of the edit distance, window $\on^*_1$ is mapped to the substring $\of[l_i,r_i)$. For each window $\on^*_i$, our algorithm finds a set of candidate intervals for the mapping of the this window. Roughly speaking, we show that our candidate set always contains an interval which is very close to $[l_r,r_i)$. We then show that using these intervals we can get a good approximation of the edit distance.

\paragraph{Finding Candidate Intervals.}

Consider a window $\on^*_i$ of the online string. Let us suppose that in the optimal solution, it is mapped to the substring $\of[l_i, r_i)$. We can always assume that $l_i=r_{i-1}$ for $i>1$, $l_1=1$ and $r_{\sqrt{n}}=n+1$. We then have

\begin{align*}
\ed(\on,\of) = \sum_{i=1}^{\sqrt{n}} \ed(\on^*_i, \of[l_i, r_i) ) \,.
\end{align*}

Our goal is to find a set of candidate intervals for $\on^*_i$ such that at least one of these intervals is very close to $[l_i, r_i)$. In order to design our algorithm, we first explore some properties of the interval $[l_i,r_i)$. We use $\alpha_i= (i-1) \sqrt{n}+1$, and $\beta_i= i \sqrt{n}$ to denote the starting and the ending of the window $\on^*_i$ respectively. Therefore, we have $\on^*_i= \on[\alpha_i, \beta_i]$. Recall that we have assumed that we are given a bound $d$ on the size of the edit distance. Therefore, in the optimal mapping $\on^*_i$ is mapped to a substring with the distance of at most $d$, and we must have 
\begin{align*}
| r_i -1 - \beta_i | \le d \,.
\end{align*}

It follows that $r_i \in [\beta_i+1 - d, \beta_i+1+d]$. Let $\kappa= \lfloor d \cdot \epsilon /\sqrt{n} \rfloor$. The algorithm considers all intervals with the ending point in $[\beta_i+1 - 2d, \beta_i+1+2d]$ such that the ending points are divisible by $\kappa$ (see Figure \ref{figs:gaps}).  We call these intervals, \textit{candidate intervals} and we call their endings \textit{candidate endings} . We also consider all intervals ending in $1$, i.e. intervals $[l,1)$, as candidate intervals if $1 \in [\beta_i+1 - 2d, \beta_i+1+2d]$.

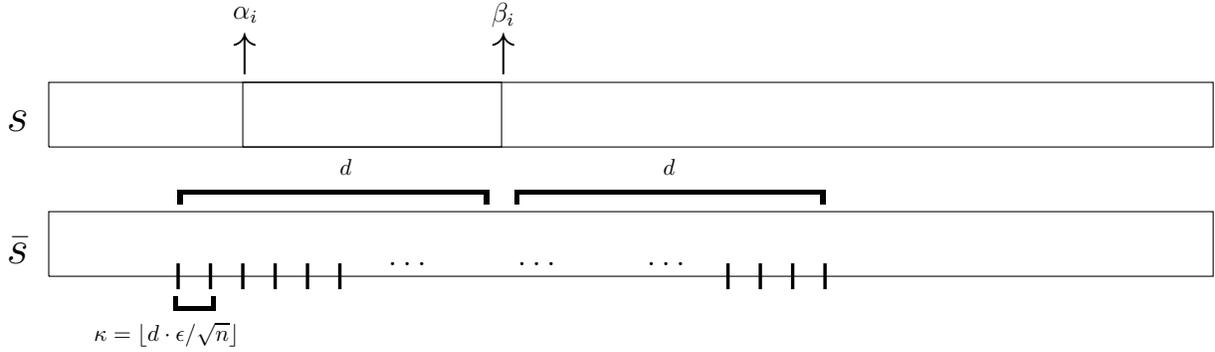
\begin{figure}[!h]
\usetikzlibrary{patterns}

\begin{center}
\begin{tikzpicture}[scale=0.86, transform shape]

\draw (0,0) -- (18,0) -- (18,1) -- (0,1) -- (0,0);

\draw (3,0) -- (7,0) -- (7,1) -- (3,1) -- (3,0);

\draw (0,-2) -- (18,-2) -- (18,-1) -- (0,-1) -- (0,-2);

\node[text width=4.8cm] at (4.4,-0.8) 
{   \Large $\overbracket{\hspace{4.8cm}}$ };

\node[text width=1cm] at (5,-0.3) 
{   \small $d$ };

\node[text width=4.8cm] at (9.6,-0.8) 
{   \Large $\overbracket{\hspace{4.8cm}}$ };

\node[text width=1cm] at (10,-0.3) 
{   \small $d$ };

\node[text width=0.65cm] at (2.25,-2.4) 
{   \Large $\underbracket{\hspace{0.65cm}}$ };

\node[text width=5cm] at (3.2,-2.9) 
{   \small $\kappa = \lfloor d \cdot \epsilon / \sqrt{n} \rfloor$ };

\draw [line width=0.45mm ](2,-2.2) -- (2,-1.8);
\draw [line width=0.45mm ] (2.5,-2.2) -- (2.5,-1.8);
\draw [line width=0.45mm ] (3,-2.2) -- (3,-1.8);
\draw [line width=0.45mm ] (3.5,-2.2) -- (3.5,-1.8);
\draw [line width=0.45mm ] (4,-2.2) -- (4,-1.8);
\draw [line width=0.45mm ] (4.5,-2.2) -- (4.5,-1.8);

\node[text width=0.5cm] at (5.5,-1.8) 
{   \Large $\ldots$ };
\node[text width=0.5cm] at (7.5,-1.8) 
{   \Large $\ldots$ };
\node[text width=0.5cm] at (9.5,-1.8) 
{   \Large $\ldots$ };

\draw [line width=0.45mm ] (10.5,-2.2) -- (10.5,-1.8);
\draw [line width=0.45mm ] (11,-2.2) -- (11,-1.8);
\draw [line width=0.45mm ] (11.5,-2.2) -- (11.5,-1.8);
\draw [line width=0.45mm ] (12,-2.2) -- (12,-1.8);

\node[text width=0.5cm] at (-0.40,-1.6) 
{   \huge $\of$ };

\node[text width=0.5cm] at (-0.40,0.4) 
{   \huge $\on$ };

\node[text width=0.3cm] at (3,1.45) 
{   \huge $\uparrow$ };
\node[text width=0.3cm] at (3,2.05) 
{   \large $\alpha_i$ };

\node[text width=0.3cm] at (7,1.45) 
{   \huge $\uparrow$ };
\node[text width=0.3cm] at (7,2.05) 
{   \large $\beta_i$ };

\end{tikzpicture}
\end{center}
\caption{The locations of the ending points for potential intervals of a window are illustrated in this figure. Thick segments show the ending points.}\label{figs:gaps}
\end{figure}

\begin{algorithm} [h]
 \KwData{An offline string $\of$ of length $n$, an online string $\on$, a bound $d$ for the edit distance, and an $\epsilon>0$.}
 \begin{algorithmic} [1]
 \STATE $\kappa = \lfloor d \cdot \epsilon /\sqrt{n} \rfloor$.
 \STATE $D \leftarrow$ an empty function.
 \STATE $D[1] \leftarrow 0$.

 \STATE Divide $\on$ into $\sqrt{n}$ windows $\on^*_1, \on^*_2, \ldots, \on^*_{\sqrt{n}}$ of size $\sqrt{n}$.
 
 \FOR {each window $\on^*_i$}
 	\STATE $T \leftarrow$ an empty function.
 	\STATE $\beta= i \sqrt{n}$.
 	\FOR {every integer $r$ in $[\beta+1 - 2d, \beta+1+2d]$ such that $r$ is $1$ or is divisible by $\kappa$}
 	\IF {$r$ is at least $1$ and at most $n+1$}
 		\STATE $T[r] \leftarrow \infty$.
 		\FOR{each $l \in D$ such that $l\le r$}
 			\STATE $T[r] \leftarrow \min\big\{T[r], D[l] + \ed(\of[l,r),\on^*_i) \big\}$.
 		\ENDFOR
 		\ENDIF
 	\ENDFOR
 	\STATE $D \leftarrow T$.
 \ENDFOR
 \RETURN $\min_{r \in D} D[r] + (n-r +1)$.
 \end{algorithmic}
\caption{Algorithm $\edrootn$ for approximating the edit distance.}
\label{alg:candidate}
\end{algorithm}

Our algorithm uses the dynamic programming to find the  best mapping of the $\on^*_i$ windows to their candidate intervals. Define the function $D_i$ as follows. Let $D_i[r]$ be the best mapping of the first $i$ windows to their candidate intervals such that  $\on^*_i$ is mapped to an interval ending in $r$. Note that for all candidate intervals for the window $\on^*_i$, their ending points are either $1$ or an integer in $[\beta_i+1 - 2d, \beta_i+1+2d]$ that is divisible by $\kappa$.
Therefore the number of possible different end points for the candidate intervals is bounded by $O(d / \kappa) = O(\sqrt{n}/\epsilon)$. Thus, function $D_i$ only takes $O(\sqrt{n}/\epsilon)$ values and we can store all values for this function in a memory of $\tilde{O}(\sqrt{n}/\epsilon)$. We say that $r \in D_i$, if the function $D_i$ takes the value $r$. In other words, $r$ is an end point for at least one of the candidate intervals for $\on^*_i$. 
Consider an ending point $r \in D_i$. Consider the optimal solution for $D_i[r]$. Let assume in that solution window $\on^*_i$ is mapped to an interval $[l,r)$ of the offline string. Then, the first $i-1$ windows are mapped to the substring $\of[1,l)$. Also, $\on^*_{i-1}$ is mapped to an interval with the ending point equal to $l$. Therefore $l$ is a candidate ending for $\on^*_{i-1}$. By the definition of the $D_i$ functions, $D_{i-1}[l]$ denotes the best mapping for the first $i-1$ windows such that $\on^*_{i-1}$ is mapped to an interval with the ending point equal to $l$. Thus, we have
 $$D_i[r]= 
D_{i-1}[l]+ \ed\big(\of[l,r),\on^*_i\big) \,.$$

According to the equation above, we can find the value for function $D_i$ by only using the values of function $D_{i-1}$. As we mentioned earlier, we can store the values of functions $D_{i}$ and $D_{i-1}$ in a memory of $\tilde{O}(\sqrt{n}/\epsilon)$.

Our algorithm (formally as Algorithm \ref{alg:candidate}), divides the online string into $\sqrt{n}$ windows $\on^*_1, \on^*_2, \cdots, \on^*_{\sqrt{n}}$. It also keeps a function $D$ of size at most $O(\sqrt{n}/\epsilon)$ which represents values of the function $D_i$ after arrival of the window $\on^*_i$. Upon arrival a new window $\on^*_{i+1}$, the algorithm stores all characters of $\on^*_{i+1}$ in the memory and update the function $D$ based on the update rule below.

\begin{align}
\label{eq:secondedupdate}
D_i[r]= \min_{ l \in D_{i-1} } D_{i-1}[l]+ \ed\big(\of[l,r),\on^*_i\big) \,.
\end{align}
According to what we have discussed, the update rule above gives the optimal answer for each $D_i$. 

\begin{theorem}\label{th:sar}
Algorithm \ref{alg:candidate} uses $\tilde{O}(\sqrt{n}/\epsilon)$ memory and finds $(1+5\epsilon)$ approximation of the edit distance between $\of$ and $\on$.
\end{theorem}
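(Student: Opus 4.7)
The plan is to lift Algorithm \ref{alg:candidate} to an overall streaming algorithm by running it in parallel for every guess $d \in \{1, \lfloor(1+\epsilon)\rfloor, \lfloor(1+\epsilon)^2\rfloor, \ldots, n\}$ and returning the minimum output; there are only $O(\log_{1+\epsilon} n)$ such guesses. I would then analyze (i) the memory of a single run, which is immediate, and (ii) the run whose guess $d^*$ is the smallest power of $(1+\epsilon)$ at least $\OPT$, where $\OPT \le d^* \le (1+\epsilon)\OPT$, which I claim outputs at most $(1+5\epsilon)\OPT$.

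For the memory bound, I would argue that at any instant a single run holds the $\sqrt{n}$ characters of the current window $\on^*_i$, the two tables $D$ and $T$ indexed by candidate endings, and workspace for the edit-distance subroutine. The candidate endings lie in $[\beta_i+1-2d, \beta_i+1+2d]$ and are either $1$ or multiples of $\kappa=\lfloor d\epsilon/\sqrt{n}\rfloor$, so $|D|, |T| = O(d/\kappa) = O(\sqrt{n}/\epsilon)$, with each entry an $O(\log n)$-bit index. By Theorem \ref{thm:logtwo}, each call $\ed(\of[l,r), \on^*_i)$ uses only $O(\log^2 n)$ additional memory, since the window is already in memory and $\of[l,r)$ is random-accessible. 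So one run takes $\tilde{O}(\sqrt{n}/\epsilon)$ memory, and the $O(\log_{1+\epsilon} n)$ parallel runs preserve this bound up to polylogarithmic factors.

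For the approximation, I would fix an optimal alignment $l_1=1$, $l_i = r_{i-1}$, $r_{\sqrt{n}}=n+1$ with $\sum_i \ed(\on^*_i, \of[l_i,r_i)) = \OPT$, and produce a candidate sequence by rounding: $p_0 = 1$, $p_i$ the nearest multiple of $\kappa$ to $r_i$ for $1 \le i < \sqrt{n}$, and $p_{\sqrt{n}}$ the largest multiple of $\kappa$ that is at most $n+1$. A telescoping identity $r_i - (\beta_i+1) = \sum_{j \le i}(r_j - l_j - \sqrt{n})$, combined with $|r_j - l_j - \sqrt{n}| \le \ed(\on^*_j, \of[l_j,r_j))$, shows $|r_i - (\beta_i+1)| \le \OPT \le d^*$, so every rounded $p_i$ stays inside the candidate range $[\beta_i+1-2d^*, \beta_i+1+2d^*]$. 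A window-by-window triangle inequality $\ed(\of[p_{i-1},p_i), \on^*_i) \le \ed(\of[l_i,r_i), \on^*_i) + |p_{i-1}-l_i| + |p_i - r_i|$ bounds the sum of window costs by $\OPT + 2\sqrt{n}\,\kappa$, and adding the final correction $(n - p_{\sqrt{n}} + 1) \le \kappa$ from the return statement gives a total of at most $\OPT + 3\sqrt{n}\,\kappa \le \OPT + 3\epsilon d^* \le (1+5\epsilon)\OPT$ for small $\epsilon$; since the DP minimizes over all candidate sequences, its output is at most this quantity.

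The main obstacle is arranging the candidate set so that rounded optima remain valid candidates while keeping $|D|$ small: widening the range from $d$ to $2d$ absorbs the rounding slack, the divisibility-by-$\kappa$ restriction controls $|D|$, and the final $(n - r + 1)$ correction handles the fact that $n+1$ itself may not be a candidate ending. Once these three ingredients are in place, a single application of the triangle inequality absorbs all error terms uniformly.
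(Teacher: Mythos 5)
Your proof is correct and follows essentially the same strategy as the paper's: round the breakpoints of an optimal window alignment to multiples of $\kappa$, verify that the rounded breakpoints remain inside the $[\beta_i+1-2d, \beta_i+1+2d]$ candidate range, bound the extra cost per window by $O(\kappa)$ via triangle inequality, sum to get $O(\epsilon d)$ additional error, and absorb the final $(n-q+1)$ correction. The only cosmetic differences are that the paper rounds each $r_i$ down to the nearest element of $\mathcal{C}$ rather than to the nearest multiple, and states the $|r_i - 1 - \beta_i| \le d$ bound without the explicit telescoping calculation that you spell out; neither change affects the argument.
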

\begin{proof}
Consider an optimal solution for $\ed(\of,\on)$. Let $\opt$ be the size of this solution, and $d$ be the best guess of our algorithm for the edit distance between $\of$ and $\on$. Then, we have
\begin{align}
\label{eq:bestguess}
\opt \le d \le (1+\epsilon) \opt \,.
\end{align}
 Suppose in the optimal solution,  window $\on^*_i$ is mapped to the substring $\of[l_i,r_i)$ of the offline string. Then we have
\begin{align}
\label{eq:edbestmap}
\opt = \ed(\of,\on) = \sum_{i=1}^{\sqrt{n}} \ed\big(\of[l_i,r_i), \on^*_i\big) \,.
\end{align}
We also have that $l_1=1$ and $l_i = r_{i-1}$ for $i >1$. Also, $r_{\sqrt{n}}=n+1$. Let $\mathcal{C}$ be the set of all integers such that they can be a candidate ending point for one of $\on^*_i$ windows. In other words,
\begin{align*}
\mathcal{C} = \{1, \kappa, 2\kappa, \cdots, \lfloor n/\kappa \rfloor \kappa \} \,.
\end{align*}
For each $l_i$ (respectively, $r_i$), let $l'_i$ (resp., $r'_i$) be the largest number in $\mathcal{C}$ that is at most $l_i$ (resp., $r_i$). Then, for each $l_i$, we have
\begin{align}
\label{eq:lrb}
l_i - \kappa < l'_i \le l_i \,.
\end{align}
Similarly, for each $r_i$, we have
\begin{align}
\label{eq:rrb}
r_i - \kappa < r'_i \le r_i \,.
\end{align}
It follows that for each interval $[l'_i, r'_i)$ we have
\begin{align}
\label{eq:ned}
\ed\big(\of[l'_i,r'_i), \on^*_i\big) &\le \ed\big(\of[l'_i,r'_i), \of[l_i,r_i)\big)+\ed\big(\of[l_i,r_i), \on^*_i\big) \nonumber & \text{By the triangle inequality.} \\
& \le 2\kappa + \ed\big(\of[l_i,r_i), \on^*_i\big) \,. & \text{By (\ref{eq:lrb}) and (\ref{eq:rrb}).}
\end{align}
It follows from (\ref{eq:ned}) that
\begin{align}
\label{eq:nrm}
&\sum_{i=1}^{\sqrt{n}} \ed\big(\of[l'_i,r'_i), \on^*_i\big) \nonumber
\\& \le 2\kappa \cdot \sqrt{n} + \sum_{i=1}^{\sqrt{n}} \ed\big(\of[l_i,r_i), \on^*_i\big) \nonumber \\
& \le 2 \epsilon \cdot d + \sum_{i=1}^{\sqrt{n}} \ed\big(\of[l_i,r_i), \on^*_i\big) \nonumber & \text{Since $\kappa =  \lfloor d \cdot \epsilon /\sqrt{n} \rfloor$.}  \\
& = 2 \epsilon \cdot d + \opt \nonumber & \text{By (\ref{eq:edbestmap}).} \\
& \le \opt (1+ 2 \epsilon (1+\epsilon)) \nonumber & \text{By (\ref{eq:bestguess}).} \\
& \le \opt (1+3\epsilon) \,. 
\end{align}
Therefore, the size of the solution that maps each window $\on^*_i$ to $\of[l'_i,r'_i]$ is at most $(1+3 \epsilon)\opt$. We show that our algorithm almost finds this solution. We claim that each $r'_i$ is a candidate endpoint for $\on^*_i$. Since $r'_i \in \mathcal{C}$, it is either 1 or it is divisible by $\kappa$. To show that $r'_i$ can be the end point of some candidate interval for $\on^*_i$, it is sufficient to show that $r'_i$ is in $[\beta_i+1 - 2d, \beta_i+1+2d]$ where $\beta_i= i \sqrt{n}$ is the end point of the window $\on^*_i$.

Because the size of the edit distance between $\of$ and $\on$ is bounded by $d$, we have
\begin{align*}
|r_i - \beta_i+1| \le d \,.
\end{align*} 
This along with (\ref{eq:rrb}) implies that
\begin{align*}
|r'_i - \beta_i+1| \le |r'_i - r_i| +|r_i - \beta_i+1| \le \kappa + d \le 2d \,.
\end{align*}
Therefore, $r'_i$ is in $[\beta_i+1 - 2d, \beta_i+1+2d]$ and $[l'_i,r'_i)$ is a candidate interval for $\on^*_i$. Thus in this solution, every window $\on^*_i$ is mapped to one of its candidate intervals. Consider the last window, it is mapped to the interval $[l'_{\sqrt{n}},r'_{\sqrt{n}})$. Let $q=r'_{\sqrt{n}}$. By the definition of $D_i$ functions, $D_{\sqrt{n}}[q]$ is the cost of the best solution such that each window is mapped to one of its candidate interval, and the ending of the last interval is $q$. Therefore,

\begin{align}
\label{eq:dnapx}
D_{\sqrt{n}}[q] &\le \sum_{i=1}^{\sqrt{n}} \ed\big(\of[l'_i,r'_i), \on^*_i\big) \nonumber \\
&\le (1+3\epsilon) \opt \,. & \text{By (\ref{eq:nrm}).}
\end{align}
After arrival of all windows, in  Algorithm \ref{alg:candidate} function $D$ will be equal to $D_{\sqrt{n}}$, and the algorithm returns
$\min_{r \in D} D[r] + (n-r +1)$ where $(n-r+1)$ is the edit distance of between part of offline string that is not in the mapping represented by $l'_i$'s and $r'_i$'s. Therefore, the solution of the algorithm is bounded by
\begin{align*}
\min_{r \in D} D[r] + (n-r +1) &= \min_{r \in D_{\sqrt{n}}} D_{\sqrt{n}}[r] + (n-r +1) \\
&\le D_{\sqrt{n}}[q] + (n-q+1) \\
&\le (1+3\epsilon)\opt + (n-q+1)  & \text{By (\ref{eq:dnapx}).} \\
&\le (1+3\epsilon)\opt + (r_{\sqrt{n}}-q) & \text{Since $r_{\sqrt{n}}= n+1$.} \\
&\le (1+3\epsilon) \opt + \kappa & \text{ By (\ref{eq:rrb}).} \\
&\le (1+3\epsilon) \opt + \epsilon \cdot d \\
&\le (1+5\epsilon) \opt \,.
\end{align*}
Therefore the approximation ratio of the algorithm is bounded by $(1+5 \epsilon)$ and it proves the theorem.
\end{proof}

The above theorem immediately implies the following.

\begin{theorem}
\label{thm:mc2}
For any $\epsilon>0$, there exists a single-pass deterministic streaming algorithm that finds a $(1+\epsilon)$ approximation of the edit distance between $\on$ and $\of$ using $\tilde{O}(\sqrt{n}/\epsilon)$ memory.
\end{theorem}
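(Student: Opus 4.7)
The plan is to deduce Theorem \ref{thm:mc2} from Theorem \ref{th:sar} in two straightforward steps: (i) rescale $\epsilon$ to absorb the constant factor in the approximation ratio, and (ii) remove the assumption that a good guess $d$ of the edit distance is provided to the algorithm, using parallel guessing over a geometric sequence.

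First I would invoke Theorem \ref{th:sar} with $\epsilon' = \epsilon/5$. That theorem gives a streaming algorithm $\edrootn$ that, given a bound $d$ satisfying $\opt \le d \le (1+\epsilon')\opt$, outputs a $(1+5\epsilon') = (1+\epsilon)$-approximation using $\tilde O(\sqrt{n}/\epsilon') = \tilde O(\sqrt{n}/\epsilon)$ memory.

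Next, since the true value of $\opt$ is unknown, I would follow the discussion at the start of Section \ref{sec:rootn}: enumerate the candidate distances $d_j := \lfloor(1+\epsilon')^j\rfloor$ for $j = 0, 1, \ldots, \lceil \log_{1+\epsilon'} n \rceil$. Because $\opt \le n$, at least one index $j^*$ satisfies $\opt \le d_{j^*} \le (1+\epsilon')\opt$, so the corresponding instance of $\edrootn$ returns a $(1+\epsilon)$-approximation. I would run all $O(\log_{1+\epsilon'} n) = \tilde O(1/\epsilon)$ copies in parallel on the same single pass over $\on$ (each copy consumes the stream independently, but the stream is read only once and broadcast to all copies) and, at the end of the stream, return the minimum of the values produced by the copies that correspond to valid guesses, i.e., those whose returned value is at most $d_j \cdot(1+\epsilon)$ (otherwise the guess was too small and the algorithm's output may not be meaningful). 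Taking the minimum is safe because the output of every individual copy lower-bounds the true edit distance in the relevant regime while the copy with $d_{j^*}$ achieves the $(1+\epsilon)$ bound.

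The memory cost of this composition is $\tilde O(\sqrt{n}/\epsilon) \cdot \tilde O(1/\epsilon) = \tilde O(\sqrt{n}/\epsilon)$ after absorbing polylogarithmic and $1/\epsilon$ factors into the $\tilde O(\cdot)$ notation, matching the bound claimed in the theorem. Determinism and single-pass are inherited from $\edrootn$ since running $\tilde O(1/\epsilon)$ deterministic single-pass algorithms in parallel on a broadcast stream is again deterministic and single-pass. There is no substantive obstacle here beyond verifying that the guess loop does not blow up the memory, which is immediate from the geometric spacing of the $d_j$'s.
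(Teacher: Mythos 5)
Your proposal matches the paper's own (very terse) proof: the paper derives Theorem~\ref{thm:mc2} from Theorem~\ref{th:sar} together with the geometric guessing of $d$ described at the opening of Section~\ref{sec:rootn}, exactly as you do, and you correctly fill in the rescaling $\epsilon'=\epsilon/5$ and the parallel-copies mechanism. Two small points worth flagging. First, your justification for taking the minimum has the inequality direction backwards: the value $\min_{r\in D} D[r]+(n-r+1)$ returned by each copy of Algorithm~\ref{alg:candidate} is the cost of an explicit sequence of edits transforming $\of$ into $\on$, hence it \emph{upper}-bounds $\ed(\of,\on)$ for every guess $d_j$ (even a too-small one). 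That is what makes the unfiltered minimum over all copies safe---it is always at least $\opt$, and the copy with the correct guess certifies it is at most $(1+\epsilon)\opt$---so your extra filtering step is unnecessary (though harmless). Second, your memory accounting $\tilde O(\sqrt n/\epsilon)\cdot\tilde O(1/\epsilon)=\tilde O(\sqrt n/\epsilon)$ quietly folds a $1/\epsilon$ factor into $\tilde O(\cdot)$; a strict reading would give $\tilde O(\sqrt n/\epsilon^2)$. You acknowledge this, and the paper's own statement is subject to the same convention, so this is not a gap in your argument relative to the paper.
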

\bibliographystyle{apalike} 
\bibliography{editdistance}
\newpage

\appendix
\section{Omitted proofs}\label{sec:appx}

\logtwo*
\begin{proof}
It is known that the $\LCS$ and the edit distance are in \textit{Nondeterministic Logarithmic-space} (NL) complexity class. This means that we can solve these problem using a non-deterministic Turing machine with a memory of $O(\log n)$. Savitch's theorem \cite{savitch1970relationships} says that every problem in NL can be solved using a deterministic Turing matching with a memory of $O(\log^2 n)$, which implies the theorem.
\end{proof}

\end{document}